\documentclass[11pt]{article}
\usepackage[utf8]{inputenc}
\usepackage[margin=1.0in]{geometry}
\usepackage{amsmath}
\usepackage{amssymb}
\usepackage{amsthm}
\usepackage{subfigure}
\usepackage[colorlinks=true,citecolor=orange]{hyperref}
\usepackage[linesnumbered,algoruled,boxed,lined]{algorithm2e}
\usepackage{algpseudocode}
\usepackage{graphics}
\usepackage{graphicx}
\usepackage{xcolor}
\usepackage{bm}
\usepackage{enumerate}
\usepackage{mathrsfs}
\usepackage{bbm}
\usepackage{float}
\usepackage{tikz}

\usepackage{thm-restate}
\usepackage[noabbrev,capitalize,nameinlink]{cleveref}
\usepackage[T1]{fontenc}
\usepackage{csquotes}

\newtheorem{theorem}{Theorem}
\newtheorem{lemma}{Lemma}

\theoremstyle{definition}
\newtheorem{definition}{Definition}

\newtheorem{assumption}{Assumption}

\newcommand{\eps}{\varepsilon}

\newcommand{\mb}{\mathbf}
\newcommand{\mc}{\mathcal}
\newcommand{\st}{\mathrm{s.t.}}

\newcommand{\diag}{\mathrm{diag}}
\newcommand{\poly}{\mathrm{poly}}

\newcommand{\wt}{\widetilde}

\newcommand{\opt}{\mathrm{OPT}}
\newcommand{\sgn}{\mathrm{sgn}}
\newcommand{\R}{\mathbb{R}}

\renewcommand{\O}{\wt{O}}
\renewcommand{\l}{\langle}
\renewcommand{\r}{\rangle}

\newcommand{\nnz}{\mathsf{nnz}}
\newcommand{\bDelta}{\boldsymbol{\Delta}}
\newcommand{\g}{\nabla}

\newcommand{\norm}[1]{\left\lVert#1\right\rVert}

\foreach \x in {a,...,z}{%
	\expandafter\xdef\csname B\x\endcsname{\noexpand\mathbf{\x}}
}

\foreach \x in {A,...,Z}{%
	\expandafter\xdef\csname B\x\endcsname{\noexpand\mathbf{\x}}
}

\foreach \x in {A,...,Z}{%
	\expandafter\xdef\csname C\x\endcsname{\noexpand\mathcal{\x}}
}

\newcommand{\defeq}{\stackrel{\mathrm{\scriptscriptstyle def}}{=}}

\makeatletter
\renewcommand*\env@matrix[1][c]{\hskip -\arraycolsep
  \let\@ifnextchar\new@ifnextchar
  \array{*\c@MaxMatrixCols #1}}
\makeatother

\title{High-Accuracy Multicommodity Flows via Iterative Refinement}
\date{\today}
\author{
Li Chen\thanks{Li Chen was supported by NSF Grant CCF-2106444.}\\ Georgia Tech\\ \texttt{lichen@gatech.edu}
\and
Mingquan Ye\thanks{Mingquan Ye was supported by NSF Grant CCF-2240024.}\\ University of Illinois at Chicago\\ \texttt{mye9@uic.edu}
}
\begin{document}
\maketitle
\thispagestyle{empty}
\begin{abstract}

The multicommodity flow problem is a classic problem in network flow and combinatorial optimization, with applications in transportation, communication, logistics, and supply chain management, etc. Existing algorithms often focus on low-accuracy approximate solutions, while high-accuracy algorithms typically rely on general linear program solvers. In this paper, we present efficient high-accuracy algorithms for a broad family of multicommodity flow problems on undirected graphs, demonstrating improved running times compared to general linear program solvers. Our main result shows that we can solve the $\ell_{q, p}$-norm multicommodity flow problem to a $(1 + \varepsilon)$ approximation in time $O_{q, p}(m^{1+o(1)} k^2 \log(1 / \varepsilon))$, where $k$ is the number of commodities, and $O_{q, p}(\cdot)$ hides constants depending only on $q$ or $p$. As $q$ and $p$ approach to $1$ and infinity respectively, $\ell_{q, p}$-norm flow tends to maximum concurrent flow.

We introduce the first iterative refinement framework for $\ell_{q, p}$-norm minimization problems, which reduces the problem to solving a series of decomposable residual problems. In the case of $k$-commodity flow, each residual problem can be decomposed into $k$ single commodity convex flow problems, each of which can be solved in almost-linear time.
As many classical variants of multicommodity flows were shown to be complete for linear programs in the high-accuracy regime [Ding-Kyng-Zhang, ICALP'22], our result provides new directions for studying more efficient high-accuracy multicommodity flow algorithms.

\end{abstract}
\newpage

\section{Introduction}

The multicommodity flow problem is a classic challenge in network flow and combinatorial optimization, where the goal is to optimally route multiple commodities through a network from their respective sources to their respective sinks, subject to flow conservation constraints. This problem has significant applications in various fields such as transportation, communication, logistics, and supply chain management~\cite{kennington1978survey,ahuja1988network,ouorou2000survey,barnhart2009multicommodity,wang2018multicommodity}. Currently, the fastest algorithms for computing high-accuracy solutions involve formulating these problems as linear programs and employing generic linear program solvers~\cite{khachiyan1980polynomial,karmarkar1984new,renegar1988polynomial,cls21}. Notably, linear programs can be reduced to multicommodity flow problems with near-linear overhead~\cite{itai1978two, dkz22}.

Existing research predominantly focuses on obtaining $(1+\varepsilon)$-approximate solutions for maximum concurrent $k$-commodity flows~\cite{lsmtpt91,Young95,Fleischer00,Kar02,GargK07,Nesterov09,Madry10}, as summarized in \cref{tab:kcommLi}. However, these low-accuracy algorithms feature running times that are polynomial in $1/\varepsilon$ for computing $(1+\varepsilon)$-approximate solutions. In contrast, high-accuracy algorithms exhibit running times that are polynomial in $\log(1/\varepsilon)$. Importantly, \cite{dkz22} demonstrated that any enhancement in the high-accuracy algorithm for the 2-commodity flow problem would result in a faster general linear program solver.

In this paper, we investigate multicommodity flow problems on undirected graphs, which possess more structure than their directed counterparts. Prior work has shown that maximum concurrent 2-commodity flow on undirected graphs can be reduced to two instances of maximum flow problems, both solvable in almost-linear time~\cite{rw66,cklpgs22}. More generally, maximum concurrent $k$-commodity flows can be reduced to $2^{k-1}$ maximum flows\footnote{To see this, the edge capacity constraint $\sum_j |f_{ej}| \le \Bu_e$ is equivalent to $|\sum_j \eps_j f_{ej}| \le \Bu_e$ for any $\eps \in \{\pm 1\}^k.$ }. Additionally, researchers have discovered that $(1+\varepsilon)$-approximate algorithms for undirected graphs are considerably faster than those for directed graphs~\cite{kmp12,klos14,sherman17}. Nevertheless, the fastest high-accuracy algorithms still rely on general linear program solvers. Given these advancements, we pose the following natural question:

\begin{displayquote}
\emph{Is it possible to solve multicommodity flow problems on undirected graphs to high accuracy more efficiently than with general linear program solvers?}
\end{displayquote}

This paper gives an affirmative answer and presents high-accuracy algorithms for a large family of multicommodity flow problems that runs in time $m^{1+o(1)}\poly(k, \log(1 / \varepsilon))$.
Our main result is an algorithm that, for any $1 \le q \le 2 \le p$ with $p = \O(1)$\footnote{We use $\O(f(n))$ to hide $\poly \log f(n)$ factors.} and $\frac{1}{q-1} = O(1)$, given edge weights $\Bw \in \R^E_+$ and $k$ vertex demands $\BD = [\Bd_1, \cdots, \Bd_k] \in \R^{V \times k}$, solves the following optimization problem to a $(1 + \varepsilon)$ approximation with high probability\footnote{We use ``with high probability'' (w.h.p.) throughout to say that an event happens with probability at least $1 - n^{-C}$ for any constant $C > 0.$} in time $O_{q, p}(m^{1+o(1)} k^2 \log(1 / \varepsilon))$:
\begin{align*}
    \min_{\text{$k$-commodity flow $\BF$ with residue $\BD$}} \norm{\BW \BF}_{q, p}^{pq} \defeq \sum_e \Bw_e^{pq} \left(\sum_j |\BF_{ej}|^q\right)^p
\end{align*}

The problem generalizes the maximum concurrent flow problem by setting the edge weights $\Bw_e$ as the reciprocal of edge capacities and letting $q \to 1, p \to \infty$. 
Therefore, $\ell_{q, p}$-norm flows are natural relaxations of the combinatorial maximum concurrent flows.
However, unlike the typical relaxations using the exponential function ($\exp(\text{congestion})$) in previous efficient approximation schemes, we show that $\ell_{q, p}$-norm problems themselves admit high accuracy solutions.
Thus, we provide a large family of multicommodity flows that admit high-accuracy solutions in almost linear time.

From a technical standpoint, our exploration of multicommodity flows reflects the research trajectory of $\ell_p$-norm single commodity flows in recent years~\cite{kpsw19,ABKS21}. This line of study has led to the development of several novel algorithmic components, some of which have proven beneficial for classical single-commodity flow as well~\cite{kls20, AMV20}. More significantly, examining $\ell_p$-norm flows, particularly their weighted variants, has directed attention towards the core challenges of flow problems. We posit that further investigation of $\ell_{q, p}$-norm flows is likely to yield similar insights, potentially for variants of multicommodity flows not known to be hard, such as unit-capacity maximum concurrent flows on undirected graphs.


To summarize, this paper introduces the first iterative refinement framework for solving $\ell_{q, p}$-norm minimization problems. The proposed framework reduces the problem to approximately resolving $O_{p, q}(k \log(1 / \varepsilon))$ instances of decomposable residual problems. For the $k$-commodity flow case, each residual problem can be divided into $k$ single commodity flow problems and solved in $km^{1+o(1)}$-time using the almost-linear time convex flow solver~\cite{cklpgs22}.
As many classical variants of multicommodity flows were shown to be complete for linear programs in the high-accuracy regime~\cite{dkz22}, our result provides new directions for studying more efficient high-accuracy multicommodity flow algorithms.



\subsection{Main Result}
Consider an undirected graph $G = (V, E)$ and parameters $q$ and $p$. The \emph{$\ell_{q, p}$-norm multicommodity flow} problem asks for a multicommodity flow $\BF \in \R^{E \times k}$ that routes the given demands while minimizing the following objective:
\begin{align}
\label{eq:LqpFlow}
\min_{\BB^\top \BF = \BD} \norm{\BW \BF}_{q, p}^{pq} \defeq \sum_e \Bw_e^{pq} \left(\sum_j |\BF{ej}|^q\right)^p
\end{align}
Here, we have:
\begin{enumerate}
\item $\BB \in \R^{E \times V}$, the edge-vertex incidence matrix of $G$.
\item $\BD= [\Bd_1, \cdots, \Bd_k] \in \R^{V \times k}$, representing the set of $k$ vertex demands, where $\l\Bd_j, \mb{1}_n\r = 0$ for each $j \in [k]$.
\item $\Bw \in \R^{E}+$, the vector of edge weights, and $\BW = \diag(\Bw)$.
\end{enumerate}

This problem can be seen as a generalization of the classical maximum concurrent flow problem, which aims to find a feasible flow $\BF$ that minimizes edge congestion:

\begin{align}
\label{eq:maxConFlow}
\min_{\BB^\top \BF = \BD} \max_{e} \frac{1}{\Bc_e} \sum_{j \in [k]} |\BF_{ej}| = \norm{\BC^{-1} \BF}_{1, \infty}
\end{align}

Here, $\Bc \in \R^{E}_+$ denotes the vector of edge capacities and $\mb{C} = \diag(\mb{c})$. Our generalization allows for fractional vertex demands, meaning that for each commodity $j$, the demand $\Bd_j$ is not restricted to source-sink pairs.

The main technical result of this paper is presented below:
\begin{restatable}{theorem}{mainLqpFlow}\label{thm:LqpFlow}
Given any $1 \le q \le 2 \le p$ with $p = \O(1)$ and $\frac{1}{q-1} = O(1)$ and an error parameter $\varepsilon > \exp(-\O(1))$, let $\opt$ be the optimal value to Problem~\eqref{eq:LqpFlow}.
There is a randomized algorithm that computes a feasible flow $\BF$ to Problem~\eqref{eq:LqpFlow} such that
\begin{align*}
    \norm{\BW \BF}_{q, p}^{pq} \le \left(1 + \varepsilon\right) \opt + \varepsilon
\end{align*}
The algorithm runs in time
\begin{align*}
    O\left(p^2 \cdot \left(\frac{1}{q - 1}\right)^{\frac{1}{q - 1}} m^{1 + o(1)} \cdot k^2 \cdot \log \frac{1}{\varepsilon}\right)
\end{align*}
with high probability.
\end{restatable}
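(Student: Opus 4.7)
The plan is to develop an iterative refinement framework for the $\ell_{q,p}$-norm multicommodity flow problem, extending the $\ell_p$-norm iterative refinement of \cite{kpsw19, ABKS21} to the mixed-norm, multicommodity setting. We start from a feasible flow $\BF^{(0)}$ (for instance, $k$ independent $\ell_2$-electrical routings of the demands) and iterate $\BF^{(t+1)} = \BF^{(t)} + \bDelta^{(t)}$, where each column $\bDelta^{(t)}_j$ is a circulation, $\BB^\top \bDelta^{(t)}_j = \mathbf{0}$, so that feasibility is preserved throughout.

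The technical core is a $\kappa$-approximate local model for the objective change $\norm{\BW(\BF^{(t)}+\bDelta)}_{q,p}^{pq}-\norm{\BW\BF^{(t)}}_{q,p}^{pq}$, with $\kappa = O(p^2 \cdot (1/(q-1))^{1/(q-1)})$. We derive this by a two-parameter Taylor-type expansion of the bivariate map $(s_1,\ldots,s_k)\mapsto (\sum_j s_j^q)^{p/q}$, yielding a per-edge model comprising a linear part $\l \g f(\BF^{(t)}),\bDelta\r$, a weighted quadratic $\sum_e a_e \bDelta_e^\top \BM_e \bDelta_e$ in which $\BM_e \in \R^{k\times k}$ is diagonal-plus-rank-one, and an additive $\ell_{q,p}$-style remainder $\sum_e b_e (\sum_j |\bDelta_{ej}|^q)^p$. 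Since $\BB^\top\bDelta = \mathbf{0}$ already decouples across commodities, the only obstacle to per-commodity decomposition is the rank-one coupling in each $\BM_e$. We resolve it either by a Woodbury-style change of variables on each edge or by a cyclic sweep across the $k$ commodities, so that one outer residual step reduces to $k$ single-commodity convex min-cost flow problems, each with per-edge convex cost of the form $a_e \Delta^2 + b_e |\Delta|^p + c_e \Delta$. Each such subproblem is solved in $m^{1+o(1)}$ time by the almost-linear-time convex flow algorithm of \cite{cklpgs22}.

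The iteration count follows a standard potential argument: one $\kappa$-approximate residual solve reduces the optimality gap $\norm{\BW\BF^{(t)}}_{q,p}^{pq}-\opt$ by a factor $1-\Omega(1/\kappa)$, and the sequential commodity-wise handling of the rank-one coupling multiplies the count by $k$, giving $O_{p,q}(k \log(1/\varepsilon))$ outer steps. Multiplying by $k$ single-commodity solves per outer step and $m^{1+o(1)}$ per solve yields the claimed $O_{p,q}(m^{1+o(1)} k^2 \log(1/\varepsilon))$ runtime; the additive $\varepsilon$ absorbs a stopping threshold for the case where $\opt$ is extremely small. The principal obstacle is proving the $\kappa$-approximate local model with the sharp $p, q$ dependence advertised in the theorem, together with a stability lemma controlling how per-commodity perturbations affect the outer weight $(\sum_j |\BF_{ej}^{(t)}|^q)^{p-1}$, and then arguing that the rank-one Hessian coupling can be resolved with only an $O(k)$ multiplicative overhead rather than something polynomial in $k$; everything else (feasibility, convex flow solver, geometric decay of the potential) is then essentially standard.
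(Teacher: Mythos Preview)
Your high-level plan---iterative refinement with each residual step reduced to $k$ single-commodity convex flows solved via \cite{cklpgs22}---matches the paper, but the mechanism you propose for decoupling the residual across commodities differs and has two concrete gaps. First, the statement that ``the only obstacle to per-commodity decomposition is the rank-one coupling in each $\BM_e$'' is false: your own remainder $\sum_e b_e(\sum_j |\bDelta_{ej}|^q)^p$ is equally coupled across $j$, and a Woodbury identity (a purely quadratic device) does nothing for it, while a cyclic sweep would require a convergence analysis for block-coordinate descent on a non-quadratic coupled objective that you have not sketched and that does not obviously cost only a factor $k$. The paper takes the opposite route: it proves a local model that is \emph{already} separable in $j$, of the form $\|\Bf\|_q^{q(p-1)}\sum_j\gamma_q(\Bx_j,|\Bf_j|)+\sum_j|\Bx_j|^{pq}$, and absorbs the price of decoupling (factors $k$ and $k^{p-1}$, via $\|\Bu\|_1^p\le k^{p-1}\|\Bu\|_p^p$) into the constants of the upper bound. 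This is precisely where the $k$ in the condition number $\lambda=O\bigl(kp(1/(q-1))^{1/(q-1)}\bigr)$ comes from; your $\kappa$ has no $k$, and recovering it through an unanalyzed cyclic sweep is the weak link.

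Second, the per-edge single-commodity cost you write, $a_e\Delta^2+b_e|\Delta|^p+c_e\Delta$, cannot be correct when $q<2$: the Bregman divergence of $|f+x|^q$ is not uniformly controlled by any fixed quadratic because $\frac{d^2}{dx^2}|x|^q=q(q-1)|x|^{q-2}$ blows up at $0$. This is why the paper (following \cite{bubeck2018homotopy,AKPS22}) uses the $\gamma_q$ function, which is quadratic for $|x|\le|f|$ and $|x|^q$ beyond, and why the high-order term has exponent $pq$, not $p$. The actual per-edge, per-commodity residual cost is $A\,x+B\,\gamma_q(6kp\,x,|\BF_{ej}|)+C\,|x|^{pq}$, and an additional step you omit is verifying that this piecewise cost admits an $O(1)$-self-concordant barrier so that \cref{thm:convexFlow} applies.
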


\subsection{Related Works}

In this section, we discuss some work related to the problem and the techniques we use.

\begin{table}[htbp]
    \centering
    \begin{tabular}{|l|c|l|c|c|}\hline 
        Year & References & Time & Directed? \\ \hline
        1990 & \cite{shahrokhi1990maximum} & $O(nm^7/\varepsilon^5)$ & Directed \\ \hline
        1991 & \cite{lsmtpt91} & $\O(mnk/\varepsilon^3)$ & Directed \\ \hline
        1996 & \cite{grigoriadis1996approximate} & $\O(mnk/\varepsilon^2)$ & Directed \\ \hline
        1996 & \cite{radzik1996fast} & $\O(mnk/\varepsilon^2)$ & Directed \\ \hline
        2009 & \cite{Nesterov09} & $\O(k^2m^2 / \varepsilon)$  & Directed \\ \hline
        2010 & \cite{Madry10} & $\O((m + k)n / \varepsilon^2)$ & Directed \\ \hline
        2012 & \cite{kmp12} & $\O(m^{4/3}\poly(k, \frac{1}{\varepsilon}))$ & Undirected \\ \hline
        2014 & \cite{klos14} & $m^{1+o(1)}k^2 / \varepsilon^2$ & Undirected \\ \hline
        2017 & \cite{sherman17} & $\O(mk/\varepsilon)$ & Undirected \\ \hline
        2019 & \cite{cls19STOC} & $(mk)^{\omega + o(1)} \log \frac{1}{\varepsilon}$ & Directed \\ \hline
    \end{tabular} 
    \caption{A summary of algorithms for the max concurrent $k$-commodity flow problem.}
    \label{tab:kcommLi}
\end{table}

\paragraph{Multicommodity Flow.}

The multicommodity flow problem is a classic problem in network flow and combinatorial optimization. Multicommodity flow has a wide range of applications in various fields which are addressed in numerous surveys~\cite{kennington1978survey,ahuja1988network,ouorou2000survey,barnhart2009multicommodity,wang2018multicommodity}.
These problems can be formulated as linear programs and solved using generic linear program solvers which remain the fastest algorithms for computing high-accuracy solutions~\cite{khachiyan1980polynomial,karmarkar1984new,renegar1988polynomial,cls21}.
On the other hand, linear programs can be reduced to $2$-commodity flow efficiently~\cite{itai1978two, dkz22}.
Specifically, any linear program can be reduced to a maximum throughput $2$-commodity flow instance with a near-linear overhead.

Much of the existing works focus on finding $(1+\varepsilon)$-approximate solutions.
\cite{shahrokhi1990maximum} gave the first FPTAS to the maximum concurrent flow problem with unit capacity (Problem~\eqref{eq:maxConFlow}).
Subsequently, a series of work~\cite{lsmtpt91,goldberg1992natural,grigoriadis1994fast,klein1994faster,karger1995adding,plotkin1995fast,grigoriadis1996approximate,radzik1996fast,shmoys1997cut} based on Lagrangian relaxation and linear program decomposition gave algorithms with improved running times for various version of the problem with arbitrary edge capacity.
These algorithms iteratively update the current flow and make progress by computing a series of either shortest path~\cite{shahrokhi1990maximum,klein1994faster,plotkin1995fast,shmoys1997cut}, or single commodity minimum cost flows~\cite{lsmtpt91,goldberg1992natural,grigoriadis1994fast,karger1995adding,grigoriadis1996approximate,radzik1996fast}.
In particular, \cite{radzik1996fast} and \cite{grigoriadis1996approximate} showed that finding $(1 + \varepsilon)$-approximate solutions can be reduced to $\O(k/\varepsilon^2)$ min-cost flow computations which take $\O(kmn/\varepsilon^2)$-time in total using the fastest min-cost flow algorithm at the time.
Combining with the almost-linear time min-cost flow algorithm yields a randomized $m^{1+o(1)} k / \varepsilon^2$-time max concurrent flow algorithm.

Later, a series of works based on multiplicative weight updates (MWU) gave conceptually simpler and faster algorithms at their time~\cite{Young95,Fleischer00,Kar02,GargK07,Madry10}.
These methods build the solution from scratch without re-routing the current flow.
At each step, they augment the current flow via a shortest path computation that favors relatively uncongested paths.
\cite{Madry10} used dynamic APSP data structure to speed up these computations and resulted in a $\O((m+k)n / \varepsilon^2)$-time max concurrent flow algorithm.
At the time, the fastest max concurrent flow runs in $\O(m^{1.5} k / \varepsilon^{2})$-time due to the $\O(m^{1.5})$-time min cost flow algorithm by \cite{ds08}.
The result of \cite{Madry10} was a significant improvement in the case when $k$ is large.

Another line of work focused on improving the $1 / \varepsilon^2$ term.
\cite{bienstock2004solving} found a FPTAS that only has $O(\frac{1}{\varepsilon \log 1/\varepsilon})$ dependence.
Later, \cite{Nesterov09} gave a FPTAS with only $O(1 / \varepsilon)$ dependence.
In particular, the algorithm by Nesterov runs in $\O(k^2m^2 / \varepsilon)$-time.

Similar to the situation of single commodity flows, researchers have discovered approximate algorithms with $m^{1.5 - O(1)}$ dependence on undirected graphs.
\cite{kmp12} gave the first $\O(m^{4/3}\poly(k, 1/\varepsilon))$ algorithm for max concurrent flow.
The algorithm implements the method of \cite{lsmtpt91} using \emph{electrical capacity-constrained flows} instead of min-cost flows.
Each electrical capacity-constrained flow can be reduced to, using width-reduction MWU~\cite{ckmst11}, $\O(m^{1/3}\poly(k, 1/\varepsilon))$ graph Laplacian systems.
In the breakthrough result of \cite{klos14}, they improved the running time to $m^{1+o(1)}k^2 / \varepsilon^2$ based on non-Euclidean gradient descents and fast oblivious routing.
Specifically, the algorithm computes an oblivious routing of congestion $m^{o(1)}$ and uses it to reduce the number of gradient descent iterations to $m^{o(1)} \cdot k / \varepsilon^2.$
Later, \cite{sherman17} introduced the idea of \emph{area convexity} and improved the iteration count to $\O(1/\varepsilon).$
This resulted in the first $\O(mk/\varepsilon)$-time max concurrent flow algorithm.

\paragraph{$\ell_p$-Norm Regression.}

The $\ell_p$-norm regression problem seeks to find a vector $\Bx$ that minimizes $\|\BA \Bx - \Bb\|_p$, where $\BA \in \R^{d \times n}$ and $\Bb \in \R^d$.
Varying in $p$ interpolates between linear regression ($p = 2$) and linear program ($p \in \{1, \infty\}$).
$\ell_p$-norm regression has gained significant attention in the past decade due to its wide range of applications and its implications for other convex optimization problems~\cite{meng2013robust,woodruff2013subspace}.
$\ell_p$-norm regression has drawn much attention in the past decade due to its wide range of applications and its implication to other convex optimization problems~\cite{meng2013robust,woodruff2013subspace}.
Many works have focused on low-accuracy algorithms for overconstrained matrices, i.e., $n \gg d$~\cite{dasgupta2009sampling,meng2013robust,woodruff2013subspace,clarkson2016fast,clarkson2017low,clarkson2019dimensionality}.
These results show various ways to find another matrix $\wt{\BA}$ with fewer rows such that $\|\wt{\BA} \Bx\|_p \approx \|\BA \Bx\|_p$ for any $\Bx$.
Then, approximate $\ell_p$-norm regression can be reduced to $\wt{\BA}$ and solved in time $O(\nnz(\BA) + \poly(d, 1/\varepsilon))$ when $p$ is a constant.

High accuracy solutions can be found using interior point methods (IPM) in $\O(\sqrt{n})$ or $\O(\sqrt{d})$ iterations~\cite{lee2019solving}.
\cite{bubeck2018homotopy} showed an homotopy method that finds high accuarcy solution in $\O(n^{|1/2 - 1/p|})$ iterations of linear system solves.
Based on the idea of iterative refinement and width reduction, a series of works~\cite{akps19,APS19,AdilS20,AKPS22} obtained improved iteration complexities of $\O_p(n^{(p - 2) / (3p - 2)})$ for $p \ge 2.$
Motivated by the success of sparse linear system solver~\cite{peng2021solving} and linear program in matrix multiplication time~\cite{cls21}, \cite{GPV21} gave a high accuracy algorithm that runs in $\O_p(n^{\theta})$ for some $\theta < \omega - \Omega(1)$, where $\omega$ is the matrix multiplication time exponent.

\paragraph{$\ell_p$-Norm Flows}
The $\ell_{q, p}$-norm formulation can be viewed as a multicommodity extension of the $\ell_p$-norm flows, which seeks to find a flow $\Bf \in \R^E$ that routes the given demand and minimizes $\|\diag(\Bw) \Bf\|_p$ where $\Bw \in \R^{E}_+.$
Varying in $p$ interpolates between the transshipment ($p=1$), the electrical flow ($p=2$), and the maximum flow problem ($p=\infty$).
Combining the result on $\ell_p$-norm regressions and Laplacian solvers, \cite{akps19} gave a $\O_p(m^{1 + |p-2|/(2p + |p-2|)})$-time $\ell_p$-norm flow algorithm.
Opening up the black box of the Spielman-Teng Laplacian solver~\cite{SpielmanTengSolver:journal}, \cite{kpsw19} gave the first $\O_p(m^{1 + O(1/\sqrt{p})})$-time high-accuracy $\ell_p$-norm flow algorithm for unweighted graphs, i.e., $\Bw = \mb{1}.$
The runtime is almost linear for $p = \omega(1)$ and this was the first almost-linear time high-accuracy algorithm for a large family of single commodity flow problems.
In the weighted case, \cite{ABKS21} combined \cite{kpsw19} and the idea of sparsification and gave a $p(m^{1+o(1)} + n^{4/3+o(1)})$-time high accuracy $\ell_p$-norm flow algorithm.
The study of the $\ell_p$-norm flow algorithms has been proved useful for single commodity flow problems such as unit-capacity maximum flows, bipartite matchings, and min-cost flows~\cite{kls20, AMV20}.



\paragraph{Continuous Optimization on Graphs.}

From a technical point of view, our $\ell_{q, p}$-norm multicommodity flow algorithm is inspired by the recent trend of applying continuous optimization techniques to solve graph problems.
As one of the earlier results in this direction, \cite{ds08} combined interior point methods and fast graph Laplacian system solvers~\cite{SpielmanTengSolver:journal} and gave a $\O(m^{3/2})$-time min cost flow algorithm.
Later, the idea culminated in a decade of works improving max flow and min-cost flow algorithms~\cite{ckmst11,madry13,sherman13,klos14,Mad16,peng16,cohen2017negative,AMV20,BLNPSSW20,kls20,bllsssw21,axiotis2022faster,bernstein2022deterministic,cklpgs22,glp21,bgjllps22}.

Beyond classical flow problems, the idea of combining continuous and combinatorial techniques gives improved algorithms for approximate shortest paths in parallel/distributed setting~\cite{Li20,zuzic2022universally}, faster network flow algorithms in distributed setting~\cite{forster2022minor}, flow diffusion~\cite{CPW21}, $\ell_p$-norm flows~\cite{kpsw19,ABKS21}, and more.

\subsection{Our Approach}

For the clarity of the presentation, we focus on the unweighted version of Problem~\eqref{eq:LqpFlow}, i.e., $\Bw_e = 1$ for each edge $e$, which is as follows:
\begin{align*}
    \min_{\BB^\top \BF = \BD} \norm{\BF}_{q, p}^{pq} = \sum_{e \in G} \left(\sum_{j=1}^k |\BF_{ej}|^q\right)^p. 
\end{align*}
The algorithm follows an overall \emph{iterative refinement} framework.
That is, given the current flow $\BF$, we want to find an update direction $\bDelta$ so that $\|\BF + \bDelta\|_{q,p}^{pq}$ is smaller than $\|\BF\|_{q,p}^{pq}.$
However, finding the optimal $\bDelta$ is equivalent to the original problem.
The idea of iterative refinement is to find a proxy \emph{residual} function $\CR(\bDelta; \BF)$ that approximates the Bregman divergence of $\|\BF\|_{q,p}^{pq}$, i.e.
\begin{align*}
    \|\BF + \bDelta\|_{q,p}^{pq} - \|\BF\|_{q,p}^{pq} - \left\l\BG, \bDelta\right\r \approx  \CR(\bDelta; \Bf)
\end{align*}
where $\BG \defeq \frac{\mathrm{d}}{\mathrm{d} \BF} \|\BF\|_{q,p}^{pq} \in \R^{E \times k}$ is the gradient.
Then, we can compute the direction $\bDelta$ by solving the \emph{residual problem}:
\begin{align*}
    \min_{\BB^\top \bDelta = \mb{0}} \left\l\BG, \bDelta\right\r + \CR(\bDelta; \BG) \approx \|\BF + \bDelta\|_{q,p}^{pq} - \|\BF\|_{q,p}^{pq}
\end{align*}
If $\CR(\bDelta; \BF)$ is a good approximation to the Bregman divergence, i.e. has a ``condition number'' of $\kappa$, we would obtain an $(1+\varepsilon)$-approximate solution in $O(\kappa \log(1 / \varepsilon))$ iterations.
On the other hand, $\CR(\bDelta; \BF)$ should be computationally easier to minimize so that we can implement each iteration efficiently.

For any $p > 1$, \cite{akps19} shows that $|x+\delta|^p$ can be locally approximated by a linear term plus an error term $\gamma_p(\delta; |x|)$, which behaves quadratically in $\delta$ when $|\delta| \le |x|$ and as $|\delta|^p$ otherwise.
Our key lemma (\cref{lem:diff_upper_lower}) extends the observation to approximating $\|\BF + \bDelta\|_{q,p}^{pq}$ and gives
\begin{align*}
\|\BF + \bDelta\|_{q,p}^{pq} - \|\BF\|_{q,p}^{pq} - \left\l\BG, \bDelta\right\r
\approx O_{p, q}(k) \sum_{e, j} \norm{\Bf_{e}}_q^{q(p-1)} \gamma_q(\bDelta_{ej}; \BF_{ej}) + O_{p, q}(k^{p}) \sum_{e, j} |\bDelta_{ej}|^{pq}
\end{align*}
Intuitively, the Bregman divergence can be approximated by a decomposable function on each coordinate $(e, j).$
The contribution of each coordinate $(e, j)$ behaves differently depending on the absolute value of $\bDelta_{ej}.$
It behaves quadratically when $|\bDelta_{ej}| \le |\BF_{ej}|$, as $|\bDelta_{ej}|^q$ when $|\bDelta_{ej}| > |\BF_{ej}|$ but smaller than $\|\Bf_e\|_q^q$, the total flow value on the edge $e$, and as $|\bDelta_{ej}|^{pq}$ otherwise.
The factors $k$ and $k^p$ comes from that given any $k$-dimensional vector $\Bx \in \R^k$, we have
\begin{align*}
    \norm{\Bx}_p^p \le \norm{\Bx}_1^p \le k^{p-1}\norm{\Bx}_p^p
\end{align*}

Surprisingly, this approximation has a conditioner number of $O_{q, p}(k)$ and is decomposable for each commodity $j$.
To obtain an $(1+\varepsilon)$-approximate solution, we only need to solve $O_{q, p}(k \log(1 / \varepsilon))$ iterations of residual problems of the form
\begin{align*}
\min_{\BB^\top \bDelta = \mb{0}} \left\l\BG, \bDelta\right\r + O_{p, q}(k) \sum_{e, j}\norm{\Bf_{e}}_q^{q(p-1)} \gamma_q(\bDelta_{ej}; \BF_{ej}) + O_{p, q}(k^{p}) \sum_{e, j} |\bDelta_{ej}|^{pq}
\end{align*}
The decomposability allows us to use the convex flow solver from \cite{cklpgs22} and solve the residual problem to high accuracy for each commodity in almost-linear time.
Thus, each iteration takes $km^{1+o(1)}$-time and the final running time is $O_{q, p}(k^2 m^{1+o(1)} \log(1/\varepsilon)).$

\subsection{Future Directions}


There are several open questions and next steps arising from the work.
\paragraph{Improving Dependency on $k$.}
Our algorithm (\cref{thm:LqpFlow}) exhibits an iteration complexity of $O_{q, p}(k \log(1/\eps))$, with each iteration solving $k$ single commodity convex flow problems. This results in a $k^2$ dependency in the overall runtime. In contrast, previous work on approximate maximum concurrent flows either reduced the problem to $\O(k)$ single commodity flow problems~\cite{lsmtpt91,grigoriadis1996approximate,radzik1996fast} or required $\O(1)$ iterations~\cite{sherman17}, yielding a linear dependency on $k$. It would be desirable to develop an $\ell_{q,p}$-norm flow algorithm that reduces the problem to only $O_{q, p}(k \log(1/\eps))$ instances of single commodity convex flow problems. One potential approach could involve solving for a single commodity in each iteration before updating the residual problem, rather than solving for all $k$ commodities at every iteration. Achieving this would necessitate a new iterative refinement lemma for the $\ell_{q,p}$-norm problem.

\paragraph{Improving Dependency on $q$.}
Our algorithm's runtime dependency on $q \in (1,2)$ is exponential, specifically $(\frac{1}{q-1})^{\frac{1}{q-1}}$. A similar dependency appears in the work on $\ell_q$-norm regressions~\cite{akps19, AKPS22}. The primary challenge lies in employing the $\gamma_q(x; 1)$ function to approximate $|1+x|^q$. Some research on $\ell_q$-norm regressions has sought to address this issue by focusing on the dual problem, which involves an $\ell_p$-norm regression problem for certain $p > 2$~\cite{ABKS21,JLS22}. However, in our case, adopting the dual approach does not eliminate the $\ell_q$-norm component due to the composite of norms present in our objective \eqref{eq:LqpFlow}. Overcoming the exponential dependency would require a more in-depth understanding of the iterative refinement framework for $\ell_q$-norms.

\paragraph{Unit-capacity Maximum Concurrent Flows.}
Our result on $\ell_{q,p}$-norm flows mirrors the line of work on $\ell_p$-norm single commodity flows~\cite{kpsw19,ABKS21}. Investigating $\ell_p$-norm flows has led to the development of algorithmic building blocks that have later proven beneficial for high-accuracy single commodity flow problems, such as unit-capacity maximum flows and uncapacitated min-cost flows~\cite{kls20,AMV20}. A natural question, then, is whether the $\ell_{q,p}$-norm flow algorithm could be leveraged to design more efficient algorithms for variants of multicommodity flows that are not yet considered difficult, such as unit-capacity maximum concurrent flows on undirected or even directed graphs.

\subsection{Paper Organization}
In \cref{sec:preli}, we introduce some preliminaries before presenting the technical parts, including the convex flow solver~\cite{cklpgs22} that is the core tool for solving the residual problem and the iterative refinement framework shown in \cite{akps19,AKPS22}. We formally present the proposed $\ell_{q,p}$-norm $k$-commodity flow algorithm in  \cref{sec:iterative_refinement}, and then solve the residual problem in \cref{sec:residualSolve}. 

\section{Preliminaries}
\label{sec:preli}
\subsection{General Notations}

We denote vectors (resp. matrices) by boldface lowercase (reps. uppercase) letters. For two vectors $\Bx, \By$, the vector $\Bx \cdot \By$ represents the entrywise product, i.e., $(\Bx \cdot \By)_i = \Bx_i\By_i$. Besides, for a vector $\Bx$, let $|\Bx|$ and $\Bx^p$ denote the entrywise absolute value and entrywise power of $\Bx$ respectively, that is, $|\Bx|_i = |\Bx_i|$ and $(\Bx^p)_i = (\Bx_i)^p$. We use $\l \Bx, \By \r$ to denote the inner product of $\Bx, \By$, i.e., $\l \Bx, \By \r = \Bx^\top\By$. For a vector $\Bx$, let $\diag(\Bx)$ represent a diagonal matrix whose $i$-th entry is equal to $\Bx_i$. 

\paragraph{Graphs.} In this paper, we consider multi-graphs $G$, with edge set $E(G)$ and vertex set $V(G)$. When the graph is clear from context, we use the short-hands $E$ for $E(G)$, $V$ for $V(G)$, $m = |E|, n = |V|$. We assume that each edge $e \in E$ has an implicit direction, used to define its edge-vertex incidence matrix $\BB \in \R^{E \times V}$. Abusing notation slightly, we often write $e = (u,v) \in E$ where $e$ is an edge in $E$ and $u$ and $v$ are the tail and head of $e$ respectively (note that technically multi-graphs do not allow for edges to be specified by their endpoints). 

We say a flow $\Bf \in \R^E$ routes a demand $\Bd \in \R^V$ if $\BB^\top\Bf=\Bd$. 
Given a multicommodity flow $\Bf \in \R^{E \times k}$, we use $\Bf_j \in \R^E, j \in [k]$ to denote the $j$-th column of $\Bf$, the flow for the commodity $j$, and $\Bf_e \in \R^k, e \in E$ to denote the row of $\Bf$ corresponding to the edge $e$, a vector of flows on $e.$

\paragraph{Model of Computation.} In this paper, for problem instances encoded with $z$ bits, all algorithms work in fixed-point arithmetic where words have $O(\log^{O(1)} z)$ bits, i.e., we prove that all numbers stored are in $[\exp(-\log^{O(1)}z), \exp(\log^{O(1)}z)]$.


\subsection{Convex Flow Solver}

In this paper, we utilize the almost-linear time convex flow algorithm from \cite{cklpgs22}.
Given a set of \emph{computationally efficient} convex cost functions on edges $\{c_e(\cdot)\}_e$, the algorithm finds a single commodity flow $\Bf$ that routes the given demand $\Bd$ and minimizes $\sum_e c_e(\Bf_e)$ up to a small $\exp(-\log^{O(1)} m)$ additive error.

\begin{assumption}[Definition 10.1 and Assumption 10.2, \cite{cklpgs22}]
\label{ass:cvxCost}
Let $K = \O(1)$ be a parameter fixed throughout.
Given a convex cost function $c: \R \to \R \cup \{+\infty\}$, $c$ is \emph{computationally efficient} if there is a barrier function $\psi_c(f, y)$ defined on the domain $\CD_c \defeq \{(f, y) |~ c(f) \le y\}$ such that
\begin{enumerate}
\item\label{item:costQuasiPoly}
The cost is quasi-polynomially bounded, i.e., $|c(f)| = O(m^K + |f|^K)$ for all $f \in \R.$ 
\item\label{item:costSC}
$\psi_c$ is a \emph{$\nu$-self-concordant barrier} for some $\nu = O(1)$, that is, the following holds
\begin{align*}
    \psi_c(\Bx) &\to \infty, \text{ as $\Bx$ approaches the boundary of $\CD_c$} \\
    \left|\g^3 \psi_c(\Bx) [\Bv, \Bv, \Bv]\right| &\le 2 \left(\g^2 \psi_c(\Bx) [\Bv, \Bv]\right)^{3/2}, \forall \Bx \in \CD_c, \Bv \in \R^2 \\
    \l\g \psi_c(\Bx), \Bv\r^2 &\le \nu \cdot \g^2 \psi_c(\Bx) [\Bv, \Bv]
\end{align*}
\item\label{item:costHessian}
The Hessian is quasi-polynomially bounded as long as the function value is $\O(1)$ bounded, i.e., for all points $|f|, |y|\le m^K$ with $\psi_c(f, y)\le \O(1)$, we have $\g^2 \psi_c(f, y) \preceq \exp(\log^{O(1)} m)\BI$. 
\item\label{item:costHessianCompute}
Both $\g \psi_c$ and $\g^2 \psi_c$ can be computed and accessed in $\O(1)$-time.
\end{enumerate}
\end{assumption}

\begin{theorem}[Theorem 10.13, \cite{cklpgs22}]
\label{thm:convexFlow}
Let $G$ be a graph, and $\Bd \in \R^V$ be a demand vector.
Given a collection of computationally efficient cost functions on edges $\{c_e\}_e$ and their barriers $\{\psi_e\}_e$ (\cref{ass:cvxCost}), there is an algorithm that runs in $m^{1+o(1)}$ time and outputs a flow $\Bf \in \R^E$ that routes $\Bd$ and for any fixed constant $C>0$, 
\begin{align*}
    c(\Bf) \le \min_{\BB^\top \Bf^* = \Bd} c(\Bf^*) + \exp(-\log^C m)
\end{align*}
where $c(\Bf) \defeq \sum_{e\in E} c_e(\Bf_e)$. 
\end{theorem}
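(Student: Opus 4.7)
The plan is to adapt the interior point method (IPM) framework of \cite{cklpgs22}, which established almost-linear time min-cost flow, to the convex-cost setting by exploiting the per-edge barriers guaranteed by \cref{ass:cvxCost}. First, I would lift the problem via slack variables: introduce $\By \in \R^E$ and minimize $\mathbf{1}^\top \By$ over the polytope $\{(\Bf, \By) : \BB^\top \Bf = \Bd,~ (\Bf_e, \By_e) \in \CD_{c_e}~ \forall e\}$, which is equivalent because the optimum forces $\By_e = c_e(\Bf_e)$. The per-edge barriers $\psi_e$ combine additively into a self-concordant barrier $\Psi(\Bf, \By) = \sum_e \psi_e(\Bf_e, \By_e)$ with parameter $O(m)$ for the feasible set, while the linear demand constraint $\BB^\top \Bf = \Bd$ is handled inside each Newton step by projecting onto the circulation space.

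Next, I would follow the central path using a robust $\ell_1$-IPM: each iteration replaces the exact Newton direction with an approximately optimal min-ratio cycle update, i.e., a circulation $\bDelta_{\Bf}$ satisfying $\BB^\top \bDelta_{\Bf} = \mathbf{0}$ together with a matched slack update, that minimizes a gradient linear form subject to an $\ell_\infty$ bound measured in the diagonal of $\g^2 \Psi$. The $O(m)$ barrier parameter combined with the robust IPM analysis then guarantees that $m^{1+o(1)}$ such steps drive the duality gap below $\exp(-\log^C m)$. The quasi-polynomial cost bound (item 1 of \cref{ass:cvxCost}) places the initial iterate inside a polynomially bounded polytope via a standard big-$M$ start, and the Hessian bound (item 3) keeps all intermediate quantities representable in $\poly(\log m)$-bit fixed-point arithmetic as required by the model of computation.

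The main obstacle, and the core technical engine inherited from \cite{cklpgs22}, is the dynamic data structure that answers these min-ratio cycle queries in amortized $m^{o(1)}$ time as the Hessian weights drift along the central path. It is built from a hierarchy of low-stretch spanning trees augmented with $\ell_1$-oblivious routings and link-cut tree summaries, with carefully scheduled rebuilds ensuring each Newton step triggers only subpolynomial amortized work; the convex-cost interface plugs in correctly because items 3 and 4 of \cref{ass:cvxCost} furnish exactly the efficiently queryable, well-bounded Hessian access that this data structure requires. After $m^{1+o(1)}$ IPM steps the iterate nearly satisfies $\BB^\top \Bf = \Bd$ up to $\exp(-\log^C m)$ error, and routing any residual demand along a single spanning tree adds only negligible additional cost by the quasi-polynomial bound on $c_e$, yielding an exactly feasible flow with the stated additive guarantee.
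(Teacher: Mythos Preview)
This theorem is not proved in the present paper at all: it is quoted verbatim from \cite{cklpgs22} as a black-box tool in the preliminaries, and the paper never revisits its proof. So there is no ``paper's own proof'' to compare your proposal against.

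That said, your sketch is a reasonable high-level summary of how \cite{cklpgs22} actually establishes the result (the $\ell_1$-IPM driven by min-ratio cycle updates, the dynamic tree-chain data structure, and the slack-variable lift to turn convex edge costs into a linear objective over the epigraph). If your goal was to outline that argument, you have the main pieces. But for the purposes of this paper you should simply cite the theorem and move on; reproducing the proof here is neither expected nor appropriate.
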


\subsection{Iterative Refinement}

At a high level, the iterative refinement framework introduced by \cite{akps19} approximates the Bregman Divergence of the function $|x|^p$ with something simpler.
\begin{definition}[Bregman divergence]
\label{def:bregman}
Given a differentiable convex function $g(\cdot)$ and any two points $\Bx, \By$ in its domain, we define its \emph{Bregman divergence} as
\begin{align*}
D_{g}(\By, \Bx) \defeq g(\By) - g(\Bx) - \l \g g(\Bx), \bDelta \r = \int_0^1 \int_0^t \l \bDelta, \g^2 g(\Bx + u\bDelta) \bDelta \r \mathrm{d} u \mathrm{d} t
\end{align*}
where $\bDelta \defeq \By - \Bx.$
\end{definition}

For any $p > 1$, \cite{akps19} shows that the Bregman Divergence of $|x+\delta|^p$ can be locally approximated by an error term $\gamma_p(\delta; |x|)$, which behaves quadratically in $\delta$ when $|\delta| \le |x|$ and as $|\delta|^p$ otherwise.

\begin{definition}[\cite{bubeck2018homotopy}]
\label{def:gamma}
For $x \in \R$, $f > 0$, and $p > 1$, we define 
\begin{align*}
    \gamma_p(x, f) \defeq \begin{cases}
    \frac{p}{2} f^{p-2} x^2 & |x| \le f, \\
    |x|^p - \left(1 - \frac{p}{2}\right) f^q & |x| > f.
    \end{cases}
\end{align*}
\end{definition}

For $1 < q \le 2$, \cite{AKPS22} approximates the Bregman divergence of $|x|^q$ using the $\gamma_q$ function we just defined.
\begin{lemma}[Lemma 2.14, \cite{AKPS22}]
\label{lemma:qIterRefine}
For $q \in (1, 2]$ and $f, x \in \R$, it holds that 
\[
\frac{q-1}{q 2^q} \cdot \gamma_q(x, |f|) \le |f+x|^q - |f|^q -q |f|^{q-2} f x \le 2^q \cdot \gamma_q (x, |f|)
\]
\end{lemma}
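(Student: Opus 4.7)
The claim is a one-dimensional, pointwise comparison between the Bregman divergence of $g(t) = |t|^q$ at the pair $(f, f+x)$ and the piecewise proxy $\gamma_q(x, |f|)$ from \cref{def:gamma}. My first move is to reduce to a normalized setting. If $f = 0$, the middle quantity is $|x|^q$ and the large-$|x|$ branch of $\gamma_q$ gives (by continuous extension) $\gamma_q(x, 0) = |x|^q$, so the claim is trivial. If $f \neq 0$, I rescale by setting $y = x/f$ and factor out $|f|^q$; by direct computation
\[
|f + x|^q - |f|^q - q|f|^{q-2} f x \;=\; |f|^q \bigl(|1 + y|^q - 1 - q y \bigr),
\]
and a branch-by-branch check of \cref{def:gamma} gives $\gamma_q(x, |f|) = |f|^q \gamma_q(y, 1)$. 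Setting $h(y) \defeq |1 + y|^q - 1 - q y$, it suffices to prove for all $y \in \R$ that
\[
\tfrac{q-1}{q \cdot 2^q}\, \gamma_q(y, 1) \;\le\; h(y) \;\le\; 2^q \, \gamma_q(y, 1).
\]

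\textbf{Integral form and case split.} I rewrite $h$ via Taylor's theorem with integral remainder as
\[
h(y) \;=\; q(q-1) \int_0^y \!\!\int_0^t |1 + s|^{q-2} \,\mathrm{d}s \,\mathrm{d}t,
\]
which manifests nonnegativity and isolates the factor $q-1$ appearing in the lower constant. I then split according to the two branches of $\gamma_q(y, 1)$, namely $|y| \le 1$ (where $\gamma_q(y, 1) = \tfrac{q}{2} y^2$) and $|y| > 1$ (where $\gamma_q(y, 1) = |y|^q - (1 - q/2)$), and within $|y| \le 1$ I further distinguish the signs of $y$ since the integrand $|1 + s|^{q-2}$ blows up only as $s \to -1$.

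\textbf{Executing the case analysis.} In the small regime $y \in [0, 1]$, the integrand $(1 + s)^{q-2}$ on $s \in [0, y]$ lies in $[2^{q-2}, 1]$, pinning $h(y)$ to $\bigl[\tfrac{q(q-1)}{2} \cdot 2^{q-2} y^2,\; \tfrac{q(q-1)}{2} y^2\bigr]$; both required inequalities then reduce to the elementary checks $q \cdot 2^{2q-2} \ge 1$ and $q - 1 \le 2^q$, which hold throughout $q \in (1, 2]$. In the small regime $y \in [-1, 0]$, the lower bound on $h$ follows from $(1 + s)^{q-2} \ge 1$ on $s \in [y, 0]$, and for the upper bound I evaluate the inner integral in closed form to obtain $h(y) = (1+y)^q - 1 - qy$ and bound the ratio $h(y)/y^2$ on $[-1, 0]$ by inspecting its endpoint values ($q(q-1)/2$ as $y \to 0$ and $q - 1$ at $y = -1$), using a short monotonicity/convexity argument to show the ratio stays below $q \cdot 2^{q-1}$ throughout. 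In the large regime $|y| > 1$, the upper bound follows from $|1 + y|^q \le (1 + |y|)^q \le 2^q |y|^q$ combined with $|qy| \le q|y|^q$ and the slack $\gamma_q(y, 1) \ge q/2$; the lower bound isolates $|y|^q$ as the dominant term, using $|1 + y|^q \ge (|y| - 1)^q$ and a Bernoulli-type estimate to absorb the additive constant $(1 - q/2)$ into the factor $\tfrac{q-1}{q \cdot 2^q}$.

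\textbf{Main obstacle.} The delicate portion is the upper bound for $y \in [-1, 0]$ near $y = -1$, where $|1 + s|^{q-2}$ is integrable but unbounded, so a naive sup-bound on the integrand fails. The clean way out is to work with the closed-form expression for $h$ and prove that $y \mapsto h(y)/y^2$ is controlled on $[-1, 0]$ (for instance via a one-variable concavity/monotonicity argument applied to $\log h(y) - 2 \log |y|$), reducing the inequality to the two endpoint checks; once this is in hand, the remaining cases amount to tracking constants so the two regimes glue continuously at $|y| = 1$.
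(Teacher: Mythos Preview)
The paper does not prove this lemma at all: it is quoted verbatim as Lemma~2.14 of \cite{AKPS22} in the preliminaries section and used as a black box. So there is no ``paper's own proof'' to compare against; you are supplying an argument the authors chose to cite rather than reproduce.

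That said, your plan is the standard one and is largely sound: the scaling reduction to $h(y)=|1+y|^q-1-qy$ versus $\gamma_q(y,1)$ is correct, the integral remainder cleanly exposes the factor $q-1$, and the small-regime bounds ($|y|\le 1$, $y\ge 0$) are fully justified. Two places deserve more than the one-line gestures you give. First, for the upper bound on $y\in[-1,0]$ you correctly flag that a sup-bound on $(1+s)^{q-2}$ fails near $s=-1$, but ``a short monotonicity/convexity argument'' for $h(y)/y^2$ is not actually short: that ratio is not monotone on $[-1,0]$ in general, and one needs either a further split (e.g.\ use $(1+s)^{q-2}\le 2^{2-q}$ on $[-1/2,0]$ and the crude bound $(1+y)^q\le 1+y$ on $[-1,-1/2]$) or a more careful endpoint-plus-derivative argument. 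Second, and more seriously, your lower bound for $|y|>1$ via $|1+y|^q\ge(|y|-1)^q$ does not work for $y$ slightly larger than $1$: it gives $h(y)\ge (y-1)^q-1-qy$, which is \emph{negative} for $y\in(1,2)$, so nothing is ``isolated'' there. A working route is to keep the integral form, bound $(1+t)^{q-1}-1\ge t^{q-1}-1$ for $t\ge 1$, and combine with the already-established value $h(1)=2^q-1-q$; this recovers enough of the $y^q$ growth while retaining the boundary contribution. The $y<-1$ side is easier since $-qy=q|y|$ contributes positively.
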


For $p \ge 2$, the Bregman divergence of $|x|^p$ can be approximated by a $x^2$ and a $|x|^p$ term.
\begin{lemma}[Lemma 2.5, \cite{AKPS22}]
\label{lemma:pIterRefine}
For $p \ge 2$ and $f, x \in \R$, it holds that 
\[
\frac{p}{8}|f|^{p - 2} x^2 + \frac{1}{2^{p+1}} |x|^p\le |f + x|^p - |f|^p - p|f|^{p - 2}f x \le 2p^2 |f|^{p - 2} x^2 + p^p |x|^p
\]
\end{lemma}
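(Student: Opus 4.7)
The plan is to reduce the two-variable inequality to a one-parameter scalar statement, then treat the upper and lower bounds by splitting on the magnitude of the normalized perturbation. Both sides are homogeneous of degree $p$ under $(f,x)\mapsto(\lambda f,\lambda x)$ and invariant under $(f,x)\mapsto(-f,-x)$, so it suffices to consider $f\in\{0,1\}$. The case $f=0$ gives $\tfrac{|x|^p}{2^{p+1}}\le|x|^p\le p^p|x|^p$, immediate. For $f=1$, setting $h(t)=|1+t|^p-1-pt$ reduces the claim to
\[
\frac{p}{8}\,t^2+\frac{|t|^p}{2^{p+1}}\;\le\;h(t)\;\le\;2p^2\,t^2+p^p\,|t|^p
\]
for all $t\in\R$, $p\ge 2$. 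The workhorse is the integral representation $h(t)=p(p-1)\,t^2\int_0^1(1-s)|1+st|^{p-2}\,ds$, which follows from $h(0)=h'(0)=0$ and $h''(s)=p(p-1)|1+s|^{p-2}\ge 0$, and which also shows $h\ge 0$ by convexity.

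For the upper bound, I split on $|t|$ versus a threshold of order $1/p$. When $|t|\ge 1$, use $|1+t|\le 2|t|$ and $2^p\le p^p$ to obtain $|1+t|^p\le p^p|t|^p$; the residual linear term satisfies $p|t|\le p t^2\le 2p^2 t^2$, concluding this regime. When $|t|\le 1/p$, the integrand is at most $(1+1/p)^{p-2}\le e$, so $h(t)\le \tfrac{e\,p(p-1)}{2}t^2\le 2p^2 t^2$. The intermediate range $1/p<|t|<1$ uses both target terms simultaneously: combining $|t|^{p-2}\ge p^{-(p-2)}$ with the key algebraic fact $p(p-1)\,2^{p-3}\le p^p$ (valid for $p\ge 2$) lets one convert a crude $|f|^{p-2}t^2$-type estimate into an $|x|^p$-type estimate via weighted AM-GM. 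For the lower bound, when $t\ge 0$ the integrand is $\ge 1$ and immediately yields $h(t)\ge \tfrac{p(p-1)}{2}t^2\ge \tfrac{p}{8}t^2$; for $|t|\ge 2$, $|1+t|\ge|t|/2$ gives $|1+t|^p\ge |t|^p/2^p$, and the correction $-1-pt$ is absorbed by slack against $|t|^p/2^{p+1}$.

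The most delicate regions are $1/p<|t|<1$ in the upper bound and $-1<t<0$ in the lower bound, where no single crude estimate (Taylor, binomial, or integral) is tight on its own, and the generous target constants $2p^2,\,p^p$ and $\tfrac{p}{8},\,2^{-(p+1)}$ must be played off against each other to absorb losses that would otherwise be exponential in $p$. Naively estimating $(1+|t|)^{p-2}\le 2^{p-2}$, for instance, overshoots $2p^2$ by an exponential factor; one instead patches the bounds $|t|^k\le \tfrac{p-k}{p-2}t^2+\tfrac{k-2}{p-2}|t|^p$ for $2\le k\le p$ (weighted AM-GM on exponents) with the observation that $h(t)$ is locally $\tfrac{p(p-1)}{2}t^2$ near $t=0$ and $|t|^p$ for $|t|\gg1$. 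The overall structure parallels the scalar $\gamma_p$-inequalities of \cite{akps19,AKPS22}, whose stated constants are chosen so that the case analysis goes through uniformly in $p$.
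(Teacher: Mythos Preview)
The paper does not prove this lemma: it is quoted verbatim as a preliminary result from \cite{AKPS22} and used as a black box. So there is no ``paper's own proof'' to compare your attempt against.

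That said, your sketch has the right shape but is not a proof. You yourself flag the two delicate regimes --- $1/p<|t|<1$ for the upper bound and $-1<t<0$ for the lower bound --- and then describe only a strategy (``patch the bounds $|t|^k\le\tfrac{p-k}{p-2}t^2+\tfrac{k-2}{p-2}|t|^p$ with the observation that\ldots'') without carrying it out. In particular: for the lower bound with $t\ge 0$ you only establish $h(t)\ge\tfrac{p(p-1)}{2}t^2$, which does not by itself deliver the required $\tfrac{1}{2^{p+1}}|t|^p$ summand; and for $-2<t<0$ (where $|1+t|$ can be arbitrarily small) neither the integral bound nor the $|1+t|\ge|t|/2$ trick applies, and you give no argument at all. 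The upper bound in the range $1/p<|t|<1$ likewise needs an explicit computation showing that the $p^p|t|^p$ term absorbs the exponential overshoot from $(1+|t|)^{p-2}$, which you assert but do not verify. If you intend to supply an independent proof rather than cite \cite{AKPS22}, these cases must be written out in full.
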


Here we present some facts about $\gamma_q$ functions that are helpful.
\begin{lemma}[Lemma 3.3, \cite{akps19}]
\label{lemma:gammaScaling}
For $q \in (1, 2]$, $f, x \in \R$, and $t \ge 1$, it holds that 
\[
t^q \cdot \gamma_q(x, |f|) \le \gamma_q(tx, |f|) \le t^2 \cdot \gamma_q(x, |f|). 
\]
\end{lemma}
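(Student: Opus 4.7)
\medskip

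The plan is to prove both inequalities by a case analysis on where $|x|$ and $|tx|$ lie relative to the threshold $|f|$, since the definition of $\gamma_q$ changes form at $|x| = |f|$. By symmetry we may replace $x$ by $|x|$ throughout (both sides of the inequality depend only on $|x|$), and I will write $F \defeq |f| \geq 0$. Because $t \ge 1$, we have $|x| \le |tx|$, so there are only three possible configurations: (i) $|tx| \le F$, (ii) $|x| > F$, and (iii) $|x| \le F < |tx|$. I will handle these separately.

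Case (i) is essentially immediate: both arguments lie in the quadratic regime, so $\gamma_q(tx,F) = \frac{q}{2}F^{q-2}(tx)^2 = t^2 \gamma_q(x,F)$, which gives the upper bound with equality, and the lower bound $t^q \gamma_q(x,F) \le t^2 \gamma_q(x,F)$ then follows from $t\ge 1$ and $q\le 2$. Case (ii) also reduces to simple algebra: writing $A = |x|^q$ and $B = (1-q/2)F^q \ge 0$, we have $\gamma_q(x,F) = A - B$ and $\gamma_q(tx,F) = t^q A - B$. The lower bound $t^q(A-B) \le t^q A - B$ is clear since $t^q \ge 1$. For the upper bound $t^q A - B \le t^2(A-B)$, I rearrange to $(t^2-t^q)A \ge (t^2-1)B$, and since $|x| > F$ gives $A > F^q$ and hence $B \le (1-q/2) A$, it suffices to prove the univariate inequality $h(t) \defeq (t^2 - t^q) - (1-q/2)(t^2-1) \ge 0$ for $t \ge 1$, which I verify via $h(1)=0$ and $h'(t) = q(t-t^{q-1}) \ge 0$.

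The main work is in Case (iii), the mixed regime, since the two sides of $\gamma_q(tx,F) \lessgtr c \cdot \gamma_q(x,F)$ have genuinely different functional forms. The idea is to normalize by setting $s \defeq |x|/F \in (0,1]$ and $u \defeq t|x|/F = ts > 1$, which yields the clean identities
\begin{align*}
    \gamma_q(x,F) = \tfrac{q}{2} F^q s^2, \qquad \gamma_q(tx,F) = F^q\bigl(u^q - 1 + \tfrac{q}{2}\bigr), \qquad t^2 \gamma_q(x,F) = \tfrac{q}{2} F^q u^2.
\end{align*}
The upper bound then reduces to $g(u) \defeq \frac{q}{2}u^2 - u^q + 1 - q/2 \ge 0$ for $u \ge 1$, which I prove by $g(1)=0$ together with $g'(u) = q(u - u^{q-1}) \ge 0$. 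The lower bound becomes $\frac{q}{2} u^q s^{2-q} \le u^q - 1 + q/2$, which I handle by the two-term bound $1 - \tfrac{q}{2} s^{2-q} \ge 1 - q/2 \ge 0$ (using $s\le 1$ and $q\le 2$), followed by $u^q \ge 1$.

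The main obstacle, as usual for these piecewise inequalities, is keeping the case analysis clean and verifying the mixed Case (iii) in both directions; once one normalizes by $(s,u)$ the remaining work is calculus of one variable and a single use of $s \le 1 < u$. None of the three cases requires any tool beyond the definition of $\gamma_q$ and elementary monotonicity arguments, so no external lemma from the paper is needed.
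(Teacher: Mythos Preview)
Your proof is correct. The paper does not actually prove this lemma: it is stated in the preliminaries as a citation of Lemma~3.3 from \cite{akps19} and used as a black box, so there is no in-paper argument to compare against. Your three-case analysis is the natural one, and the normalization $s = |x|/F$, $u = ts$ in the mixed regime is clean; each of the one-variable inequalities $h(t)\ge 0$ and $g(u)\ge 0$ is verified correctly via $h(1)=g(1)=0$ and nonnegative derivative, and the lower bound in Case~(iii) follows exactly as you wrote from $s^{2-q}\le 1$ and $u^q\ge 1$.
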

\section{Iterative Refinement Algorithm}\label{sec:iterative_refinement}
\newcommand{\uinner}{\langle{\mb{1}, \mb{u}}\rangle}
\newcommand{\LHS}{\textnormal{LHS}}

In this section, we present the $\ell_{q, p}$-norm $k$-commodity flow algorithm based on iterative refinement and prove \cref{thm:LqpFlow}.

\mainLqpFlow*

In the rest of the paper, we use $\CE(\BF)$ to denote the objective of Problem~\eqref{eq:LqpFlow}, that is, $\CE(\BF) = \|\BW \BF\|_{q, p}^{pq}.$

Our iterative refinement algorithm is based on an approximation to the Bregman divergence of the objective $\|\BW \BF\|_{q, p}^{pq}.$
Because the objective can be decomposed into a summation of $m$ seperate terms, i.e. $\sum_e \Bw_e^{pq} \|\BF_e\|_q^{pq}$, we approximate the Bregman divergence for each edge separately.
That is, we prove the following lemma that approximates $\|\BF_e + \Bx_e\|_q^{pq}$ for each edge $e.$
This is the key technical lemma of this paper and we defer the proof to \cref{sec:proof_lemma_lr}.
\begin{restatable}{lemma}{LqpIterRefine}[$\ell_{q, p}$-Norm Iterative Refinement]
\label{lem:diff_upper_lower}
Given $1 < q \le 2 \le p$, and any $\Bf, \Bx \in \R^k$, we have
\begin{align}
    \|\Bf + \Bx\|_q^{pq} - \|\Bf\|_q^{pq} - p q \|\Bf\|_q^{q(p-1)} \langle |\Bf|^{q-2} \cdot \Bf, \Bx \rangle
    &\ge \frac{p (q-1)}{16} \|\Bf\|_q^{q(p - 1)} \cdot \gamma_q(\Bx, |\Bf|) + \frac{q-1}{pq - 1} \frac{1}{2^{pq+2}} \|\Bx\|_{pq}^{pq}, \label{ineq:qpLB}\\
    \|\Bf + \Bx\|_q^{pq} - \|\Bf\|_q^{pq} - p q \|\Bf\|_q^{q(p-1)} \langle |\Bf|^{q-2} \cdot \Bf, \Bx \rangle
    &\le \frac{7}{k} \|\Bf\|_q^{q(p-1)} \cdot \gamma_q(6kp \Bx, |\Bf|) + \frac{3(6pk)^{pq}}{k} \|\Bx\|_{pq}^{pq} \label{ineq:qpUB} 
\end{align}
where we write $\gamma_q(\Bx, \By) \defeq \sum_j \gamma_q(\Bx_j, \By_j)$ for vectors $\Bx, \By \in \R^k.$
\end{restatable}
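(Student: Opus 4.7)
The plan is to decompose the Bregman divergence of $h(\Bf) \defeq \|\Bf\|_q^{pq}$ via the composition $h = \phi \circ g$, where $\phi(y) = y^p$ on $y \ge 0$ and $g(\Bf) = \|\Bf\|_q^q$. A direct calculation using the chain rule yields
\[
    D_h(\Bf+\Bx, \Bf) \;=\; D_\phi\bigl(g(\Bf+\Bx),\, g(\Bf)\bigr) \;+\; p\,g(\Bf)^{p-1}\, D_g(\Bf+\Bx, \Bf).
\]
Writing $u \defeq \|\Bf\|_q^q$ and $\delta \defeq g(\Bf+\Bx) - u$, this reduces the lemma to bounding (i) the scalar Bregman divergence $D_\phi(u+\delta, u)$ using \cref{lemma:pIterRefine} and (ii) the coordinate-separable $D_g(\Bf+\Bx, \Bf) = \sum_j \bigl(|\Bf_j+\Bx_j|^q - |\Bf_j|^q - q|\Bf_j|^{q-2}\Bf_j \Bx_j\bigr)$ term-by-term using \cref{lemma:qIterRefine}. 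This produces the sandwich
\[
    \tfrac{q-1}{q 2^q}\gamma_q(\Bx, |\Bf|) \le D_g \le 2^q\gamma_q(\Bx, |\Bf|), \qquad \tfrac{p}{8}u^{p-2}\delta^2 + \tfrac{|\delta|^p}{2^{p+1}} \le D_\phi \le 2p^2 u^{p-2}\delta^2 + p^p |\delta|^p.
\]

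For the lower bound \eqref{ineq:qpLB}, the $\gamma_q$ coefficient drops out of $p u^{p-1} D_g$ after noting $q 2^q \le 8$ for $q \in (1, 2]$, giving the claimed $\frac{p(q-1)}{16}$. To extract $\|\Bx\|_{pq}^{pq}$, I split on whether $\|\Bx\|_q \le 4\|\Bf\|_q$. If yes, every $|\Bx_j|^q \le 4^q u$, so $|\Bx_j|^{pq} = |\Bx_j|^q \cdot |\Bx_j|^{q(p-1)} \le 4^{q(p-1)} u^{p-1} |\Bx_j|^q$, and the trailing factor $|\Bx_j|^q$ is absorbed into $u^{p-1}\gamma_q(\Bx_j,|\Bf_j|)$ via $\gamma_q(x,f) \ge \tfrac{q}{2}\min(|x|^q, f^{q-2}x^2)$, augmented by the observation $|\Bx_j|^{pq} \le u^{p-1}|\Bf_j|^{q-2}\Bx_j^2$ in the regime $|\Bx_j| \le |\Bf_j|$. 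If no, the reverse triangle inequality yields $\|\Bf+\Bx\|_q \ge \tfrac{3}{4}\|\Bx\|_q$, hence $\delta \ge ((3/4)^q - (1/4)^q)\|\Bx\|_q^q \ge \tfrac12 \|\Bx\|_q^q$ on $q \in (1,2]$; combining with $D_\phi \ge \tfrac{|\delta|^p}{2^{p+1}}$ and the monotonicity $\|\Bx\|_q^{pq} \ge \|\Bx\|_{pq}^{pq}$ delivers the $\|\Bx\|_{pq}^{pq}$ contribution with the target constant $\frac{q-1}{pq-1}\frac{1}{2^{pq+2}}$.

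For the upper bound \eqref{ineq:qpUB}, the $k$ factor and the scaling $6pk$ inside $\gamma_q$ arise from controlling cross-coordinate interactions in $\delta$. Decompose $\delta = \sum_j \delta_j$ with $\delta_j \defeq |\Bf_j+\Bx_j|^q - |\Bf_j|^q$, apply Cauchy--Schwarz to get $\delta^2 \le k \sum_j \delta_j^2$ (this is the source of the $\frac{1}{k}$ prefactor), and bound each $|\delta_j| \le q(|\Bf_j|^{q-1} + |\Bx_j|^{q-1})|\Bx_j|$ using concavity of $t \mapsto t^{q-1}$ on $[0,\infty)$ for $q \in (1,2]$. Plugging into $2p^2 u^{p-2}\delta^2$ and $p^p|\delta|^p$ yields per-coordinate pieces of the form $u^{p-2}|\Bf_j|^{2(q-1)}\Bx_j^2$, $u^{p-2}|\Bf_j|^{q-1}|\Bx_j|^{q+1}$, and $|\Bx_j|^{pq}$. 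Each is absorbed into either $\frac{7}{k} u^{p-1}\gamma_q(6pk\Bx, |\Bf|)$ or $\frac{3(6pk)^{pq}}{k}\|\Bx\|_{pq}^{pq}$ by splitting coordinates on $|\Bx_j| \lessgtr |\Bf_j|/(6pk)$ and invoking \cref{lemma:gammaScaling} with $t=6pk$ to obtain $\gamma_q(6pk\Bx_j,|\Bf_j|) \ge (6pk)^q \gamma_q(\Bx_j,|\Bf_j|)$; the factor $6pk$ is chosen just large enough that the induced $(6pk)^q$ swallows all stray Cauchy--Schwarz and Hölder overhead.

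The main obstacle is the bookkeeping in the upper bound: the mixed cross-term $u^{p-2}|\Bf_j|^{q-1}|\Bx_j|^{q+1}$ is neither purely quadratic nor purely of $|\Bx|^{pq}$ type, so routing it into the two RHS buckets hinges on exactly the right scale ($6pk$ rather than some cruder $\poly(p,k)$) so that one $(6pk)^q$ absorbs the Hölder loss and leaves clean $(6pk)^{pq}/k$ for the high-degree residual. On the lower-bound side, the subtlety is that neither $u^{p-1}\gamma_q$ alone nor $|\delta|^p$ alone yields the stated $\frac{q-1}{pq-1}\cdot 2^{-pq-2}$ constant globally; both the bounded and unbounded $\|\Bx\|_q/\|\Bf\|_q$ cases must be combined to cover all regimes.
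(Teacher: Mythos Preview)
Your chain-rule identity $D_h(\Bf+\Bx,\Bf) = D_\phi\bigl(g(\Bf+\Bx),g(\Bf)\bigr) + p\,u^{p-1}D_g(\Bf+\Bx,\Bf)$ with $u=g(\Bf)$ is correct, and for the upper bound your plan is essentially the paper's: apply \cref{lemma:pIterRefine} to $D_\phi$, decouple the coordinates of $\delta = \sum_j \delta_j$ via Cauchy--Schwarz and the power-mean inequality (this is exactly where the paper's factors $k$ and $k^{p-1}$ appear), and then run a per-coordinate case analysis on $|\Bx_j|$ versus $|\Bf_j|$ and $|\Bf_j|/(6kp)$. The paper frames it by first replacing each $\delta_j$ with the upper bound $\Bu_j = q|\Bf_j|^{q-2}\Bf_j\Bx_j + 2^q\gamma_q(\Bx_j,|\Bf_j|)$ and expanding $(C+\langle\mb{1},\Bu\rangle)^p$, but the resulting bookkeeping is the same.

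The gap is in your lower bound, specifically the extraction of $\|\Bx\|_{pq}^{pq}$. In Case~1 you bound $|\Bx_j|^{pq} \le \tfrac{2}{q}\,4^{q(p-1)}\,u^{p-1}\gamma_q(\Bx_j,|\Bf_j|)$ and propose to pay for this out of the residual $\tfrac{p(q-1)}{16}u^{p-1}\gamma_q$; that would require $pq(pq-1) \ge 2^{q(p-2)+3}$, which already fails at $(q,p)=(2,3)$. No single threshold repairs both cases: Case~1 via $\gamma_q$ alone needs $t^{q(p-1)}$ to stay polynomial in $p$ and hence $t\le 2$, while Case~2 needs $t>2$ merely to guarantee $\delta>0$. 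Even your Case~2 estimate $D_\phi \ge 2^{-2p-1}\|\Bx\|_{pq}^{pq}$ does not meet the target $\frac{q-1}{(pq-1)2^{pq+2}}$ uniformly: it requires $\frac{pq-1}{q-1} \ge 2^{p(2-q)-1}$, and at $(q,p)=(1.1,10)$ this reads $100 \ge 256$. So the two-bucket split, as described, cannot deliver the stated constant.

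The paper obtains the $\|\Bx\|_{pq}^{pq}$ term by a mechanism that does not pass through the composition at all. It writes $D_h$ as the double integral $\int_0^1\!\int_0^t \Bx^\top \nabla^2 h(\Bf+u\Bx)\,\Bx \,\mathrm{d}u\,\mathrm{d}t$, shows by direct computation that $\nabla^2 h(\By) \succeq pq(q-1)\|\By\|_q^{q(p-1)}\diag(|\By|^{q-2})$, and then applies the elementary inequality $\bigl(\sum_j a_j^q\bigr)^{p-1}\sum_j a_j^{q-2}\Bx_j^2 \ge \sum_j a_j^{pq-2}\Bx_j^2$ (just $\sum_j a_j^q \ge a_i^q$ termwise) to make the integrand coordinate-separable. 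The per-coordinate integral is then recognized as $\frac{1}{pq(pq-1)}D_{|\cdot|^{pq}}(\Bf_j+\Bx_j,\Bf_j)$, and \cref{lemma:pIterRefine} gives the $\frac{1}{2^{pq+1}}|\Bx_j|^{pq}$ lower bound. This yields the constant $\frac{q-1}{(pq-1)2^{pq+1}}$ directly, with no case split; the paper then averages with the $\gamma_q$ bound (derived as you do) to get the final statement.
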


Using \Cref{lem:diff_upper_lower}, we can define the \emph{residual function} $\CR(\Bx, \Bf)$ that upper bounds the difference in the objective, i.e., $\CR(\Bx, \Bf) \ge \|\Bf + \Bx\|_{q}^{pq} - \|\Bf\|_{q}^{pq}.$

\begin{definition}[Residual Problem]
\label{def:residual}
Given two vectors $\Bf, \Bx \in \R^k$, we define $\CR(\Bx; \Bf)$ as
\begin{align*}
\CR(\Bx; \Bf) \defeq p q \|\Bf\|_q^{q(p-1)} \langle |\Bf|^{q-2} \cdot \Bf, \Bx \rangle + \frac{7}{k} \|\Bf\|_q^{q(p-1)} \cdot \gamma_q(6kp \Bx, |\Bf|) + \frac{3(6pk)^{pq}}{k} \|\Bx\|_{pq}^{pq}
\end{align*}
In the setting of Problem~\ref{eq:LqpFlow}, given a feasible flow $\BF \in \R^{E \times k}$, we define the \emph{residual problem} w.r.t. $\BF$ as follows
\begin{align}
\label{eq:residualProb}
\min_{\BX \in \R^{E \times k}}\ &\CR(\BX; \BF) \defeq \sum_e \Bw_e^{pq} \CR(\BX_e; \BF_e)\\
\st \ &\BB^\top \BX = \mb{0} \nonumber
\end{align}
\end{definition}



Note that in the residual problem, the objective is decomposable for each coordinate $(e, j).$
The decomposability allows us to use the almost-linear time convex flow solver (\cref{thm:convexFlow}) to solve the residual problem to high accuracy.
The following lemma summarizes the algorithm and the proof is deferred to \cref{sec:residualSolve}. 
\begin{restatable}{lemma}{resSolve}
\label{lem:residualSolve}
Given any feasible flow $\BF \in \R^{E \times k}$ to Problem~\eqref{eq:LqpFlow}, there is a randomized algorithm that runs in $km^{1+o(1)}$ and outputs, for any $C > 0$, a $k$-commodity circulation $\BX$ such that
\begin{align*}
    \CR(\BX; \BF) \le \min_{\BB^\top \BX^* = \mb{0}} \CR(\BX^*; \BF) + \exp\left(-\log^C m\right)
\end{align*}
\end{restatable}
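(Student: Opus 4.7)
The plan is to exploit the decomposability of $\CR(\BX; \BF)$ across commodities in order to reduce Problem~\eqref{eq:residualProb} to $k$ independent single-commodity convex flow instances and then invoke the almost-linear-time convex flow solver of \cref{thm:convexFlow}. Concretely, expanding the definition of $\CR(\BX_e; \BF_e)$ and using that $\langle |\BF_e|^{q-2}\cdot\BF_e, \BX_e\rangle$, $\gamma_q(6kp\BX_e, |\BF_e|)$, and $\|\BX_e\|_{pq}^{pq}$ all split as sums over $j\in[k]$, we can write
\begin{align*}
\CR(\BX;\BF) = \sum_{j=1}^k \sum_{e\in E} c_{e,j}(\BX_{ej}),
\end{align*}
where each $c_{e,j}:\R\to\R$ is a univariate convex function (a linear term plus a $\gamma_q(6kp\,\cdot\,,|\BF_{ej}|)$ term plus a $|\cdot|^{pq}$ term, each scaled by $\Bw_e^{pq}$ and powers of $\|\BF_e\|_q$). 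The constraint $\BB^\top\BX=\mb{0}$ likewise splits as $\BB^\top\BX_{\cdot,j}=\mb{0}$ for each $j$, so the residual problem becomes $k$ decoupled single commodity convex flow problems
\begin{align*}
\min_{\BB^\top\Bx_j = \mb{0}}\ \sum_{e\in E} c_{e,j}(\Bx_{ej}),\qquad j=1,\ldots,k.
\end{align*}

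The next step is to verify that each $c_{e,j}$ is computationally efficient in the sense of \cref{ass:cvxCost} so that \cref{thm:convexFlow} applies. All coefficients in $c_{e,j}$ are bounded by $\poly(m,k,p)$ given the bit-complexity assumption on inputs, which handles the quasi-polynomial bound in item~\ref{item:costQuasiPoly}. For the self-concordant barrier in item~\ref{item:costSC}, we build $\psi_{c_{e,j}}(f,y)$ on the epigraph $\{c_{e,j}(f)\le y\}$ by combining a linear shift with barriers for the epigraphs of $\gamma_q(\alpha\cdot,\beta)$ and of $|\cdot|^{pq}$; such $O(1)$-self-concordant barriers have been constructed and used in essentially the same form in prior iterative-refinement based flow algorithms~\cite{kpsw19,ABKS21,cklpgs22}, and for the piecewise $\gamma_q$ function one uses the standard trick of splitting the epigraph into a quadratic region and a power-function region and gluing the barriers via a sum, since each region is an $O(1)$-self-concordant cone. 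Items~\ref{item:costHessian} and~\ref{item:costHessianCompute} follow from the explicit closed forms of these barriers together with the quasi-polynomial bounds on inputs.

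With computational efficiency in hand, \cref{thm:convexFlow} produces, for each commodity $j$, a flow $\BX_{\cdot,j}$ that routes zero demand (i.e., a circulation) and whose cost is within additive $\exp(-\log^{C'} m)$ of optimal for any fixed $C'>0$, in $m^{1+o(1)}$ time. Summing the $k$ per-commodity errors and taking $C'$ slightly larger than $C$ absorbs the $k\le n$ factor into the $\exp(-\log^C m)$ target. Stacking the per-commodity solutions yields a circulation $\BX$ whose total residual objective is within the desired additive error of $\min_{\BB^\top\BX^\ast=\mb{0}}\CR(\BX^\ast;\BF)$, and the total runtime is $k\cdot m^{1+o(1)} = km^{1+o(1)}$ as claimed.

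The main obstacle is the second paragraph: producing a concrete $O(1)$-self-concordant barrier for the epigraph of $c_{e,j}$ with Hessian bounds compatible with \cref{ass:cvxCost}. The linear and $|\cdot|^{pq}$ pieces are standard, but $\gamma_q(\alpha x,\beta)$ is piecewise (quadratic for $|\alpha x|\le\beta$, a shifted $|\alpha x|^q$ elsewhere), so care is needed in gluing barriers across the kink at $|\alpha x|=\beta$ while keeping self-concordance parameters and Hessian norms under control; fortunately this is precisely the situation handled in the $\ell_p$-norm flow literature~\cite{kpsw19,ABKS21} and the convex flow framework of~\cite{cklpgs22}, so the construction should follow by a direct adaptation rather than a new idea.
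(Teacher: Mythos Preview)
Your high-level plan is correct and matches the paper exactly: $\CR(\BX;\BF)$ decomposes as $\sum_{j}\sum_e c_{e,j}(\BX_{ej})$ with $c_{e,j}(x) = A_{ej}x + B_{ej}\gamma_q(6kpx, |\BF_{ej}|) + C_{ej}|x|^{pq}$, the constraint decouples across commodities, and one invokes \cref{thm:convexFlow} on each of the $k$ single-commodity subproblems, absorbing the factor $k$ into the exponent of the error.

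The gap is in the barrier construction you sketch. \cref{ass:cvxCost} requires a self-concordant barrier on the two-dimensional epigraph $\CD_{c_{e,j}}=\{(x,y):c_{e,j}(x)\le y\}$, and this set is \emph{not} the intersection of the epigraphs of the individual summands $B\gamma_q$ and $C|\cdot|^{pq}$: that intersection is the epigraph of their pointwise maximum, not their sum, so adding the separate barriers does not produce a barrier for $\CD_{c_{e,j}}$. Similarly, the epigraph of the piecewise $\gamma_q$ is not an intersection of a ``quadratic region'' and a ``power region'', so ``gluing the barriers via a sum'' across the kink does not apply either. Introducing slack variables to split the sum would raise the dimension beyond the $(x,y)$ format that \cref{ass:cvxCost} permits.

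The paper's route is both different and simpler: it treats $c(x)=Ax+B\gamma_q(6kpx,f)+C|x|^{pq}$ as a single scalar convex function and appeals to Nemirovski's criterion (Theorem~9.1.1 of \cite{nemirovski04}), which says that $\psi_c(x,y)=-\ln(y-c(x))$ is a $1$-self-concordant barrier on $\CD_c$ whenever $|c'''(x)|\le 3\beta\,|c''(x)/x|$ for some constant $\beta$. This inequality is verified pointwise by writing out $c''$ and $c'''$ in the two regimes $|6kpx|\le f$ and $|6kpx|>f$; a short comparison of the quadratic and power contributions in each regime shows $\beta=O(pq)$ suffices. No gluing is needed, and Items~\ref{item:costHessian} and~\ref{item:costHessianCompute} follow from the explicit formula for $\psi_c$. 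So your instinct that the barrier step is ``standard'' is correct, but the mechanism is a single log-barrier plus a derivative inequality rather than a sum of epigraph barriers.
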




Now, we are ready to prove \cref{thm:LqpFlow} with the following \cref{alg:high_multicommodity}.
The algorithm starts with some initial flow and runs in $T$ iterations.
Each iteration, the algorithm update the flow $\BF^{(t+1)} \gets \BF^{(t)} + \BX^{(t)}$ with a near-optimal solution to the residual problem.
In other words, given a current flow $\BF$, the algorithm update the flow by solving a simpler residual problem which is an upper bound to the change in objective value.

\begin{algorithm}
\caption{High-Accuracy $\ell_{q,p}$-norm Multicommodity Flow \label{alg:high_multicommodity}}
\SetKwProg{Proc}{procedure}{}{}
\Proc{$\textsc{LqpNormFlow}(G, \Bw, \BD, q, p)$}{
Initialize the flow $\BF^{(0)} \in \R^{E \times k}$ such that $\BB^\top \BF^{(0)} = \mb{D}$ via \cref{lem:initialFlow}. \\ 
$T \gets O(p \lambda \log(m/\eps))$, where $\lambda$ comes from \cref{lem:Efx_Ef_lambda_x} \\
\For{$t = 0$ \KwTo $T - 1$}
{
Solve the residual problem~\eqref{eq:residualProb} w.r.t. $\BF^{(t)}$ via \cref{lem:residualSolve} and obtain the solution $\BX^{(t)}.$ \\
$\BF^{(t+1)} \gets \BF^{(t)} + \BX^{(t)}$ \label{line:Ftp1}\\
}
\Return $\BF^{(T)}$\\
}
\end{algorithm}

To analyze \cref{alg:high_multicommodity} and prove \cref{thm:LqpFlow}, we first relate the value of $\CR(\BX; \BF)$ to the change in objective value when updating $\BF$ with $\BX.$
\begin{restatable}{lemma}{ResidualApprox}
\label{lem:Efx_Ef_lambda_x}
For any $\BF, \BX \in \R^{E \times k}$, we have
\begin{align}
\CE(\BF + \BX) - \CE(\BF) &\le \CR(\BX; \BF)\text{, and} \label{ineq:Efx_Ef}\\
\CE(\BF + \lambda\BX) - \CE(\BF) &\ge \lambda \CR(\BX; \BF)\text{ where }\lambda \defeq O\left(k p \left(\frac{4032}{q-1}\right)^{\frac{1}{q-1}}\right). \label{ineq:Ef_lambda_x_Ef}
\end{align}
\end{restatable}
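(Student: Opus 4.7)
The plan is to reduce both inequalities to the per-edge bounds provided by \cref{lem:diff_upper_lower}. Writing
\[
\CE(\BF+\BX) - \CE(\BF) \;=\; \sum_e \Bw_e^{pq}\left(\|\BF_e+\BX_e\|_q^{pq} - \|\BF_e\|_q^{pq}\right),
\]
the first inequality \eqref{ineq:Efx_Ef} is immediate: apply the upper bound \eqref{ineq:qpUB} to each summand with $\Bf = \BF_e$ and $\Bx = \BX_e$, and observe that the resulting sum is precisely $\sum_e \Bw_e^{pq}\CR(\BX_e;\BF_e) = \CR(\BX;\BF)$ by the definition of $\CR$.

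For \eqref{ineq:Ef_lambda_x_Ef}, I apply the lower bound \eqref{ineq:qpLB} per-edge with $\Bf=\BF_e$ and $\Bx=\lambda\BX_e$, weight by $\Bw_e^{pq}$, and sum. The linear-in-$\BX$ terms on the two sides both scale exactly by $\lambda$, so they cancel. What remains is to verify, on every edge $e$, two scalar inequalities: one comparing $\gamma_q(\lambda\BX_e,|\BF_e|)$ against $\gamma_q(6kp\BX_e,|\BF_e|)$, and one comparing $\lambda^{pq}\|\BX_e\|_{pq}^{pq}$ against $\lambda(6pk)^{pq}\|\BX_e\|_{pq}^{pq}/k$.

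To handle the $\gamma_q$ comparison, I set $\mu \defeq \lambda/(6kp)$ and apply the scaling bound \cref{lemma:gammaScaling} coordinatewise, yielding $\gamma_q(\lambda\BX_e,|\BF_e|)\ge \mu^q\,\gamma_q(6kp\BX_e,|\BF_e|)$ whenever $\mu\ge 1$. After substituting and cancelling common factors, the desired condition reduces to the clean scalar inequality $\mu^{q-1} \ge 672/(q-1)$, which motivates the choice $\lambda = \Theta\bigl(kp\,(4032/(q-1))^{1/(q-1)}\bigr)$; the factor $6^{q-1}\le 6$ hidden when converting $\mu$ back to $\lambda$ is exactly what turns the $672$ into the $4032 = 6\cdot 672$ appearing in the lemma. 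The $\ell_{pq}$ comparison reduces to the polynomial condition $\lambda^{pq-1} \gtrsim (pq-1)(12pk)^{pq}/\bigl(k(q-1)\bigr)$, which I will verify directly by substitution.

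The main obstacle I anticipate is confirming that this second condition is satisfied by the same $\lambda$ dictated by the $\gamma_q$ bound, rather than forcing a strictly larger $\lambda$ with a $(pq-1)$-th root. It does hold because $\lambda^{pq-1}$ inherits a super-polynomial factor of $(1/(q-1))^{(pq-1)/(q-1)}$, which easily absorbs the $2^{pq}(6pk)^{pq}$ terms on the right-hand side for all $q\in(1,2]$ and $p\ge 2$. Once both per-edge inequalities are established, summing over $e$ and reinstating the cancelled linear term produces \eqref{ineq:Ef_lambda_x_Ef}.
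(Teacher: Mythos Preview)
Your proposal is correct and follows essentially the same approach as the paper: reduce to per-edge bounds via \cref{lem:diff_upper_lower}, match the linear terms, and then verify the two scalar inequalities using \cref{lemma:gammaScaling} for the $\gamma_q$ term and a direct comparison for the $\ell_{pq}$ term, just as the paper does. One minor slip: the conversion factor when passing from $\mu$ to $\lambda$ is $6^{1/(q-1)}\ge 6$, not $6^{q-1}\le 6$ (the latter inequality points the wrong way for your purposes); with this corrected, your derivation of the constant $4032=6\cdot 672$ is exactly the paper's.
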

\begin{proof}
By the definition of $\mc{R}(\Bx)$ and Lemma~\ref{lem:diff_upper_lower}, the Inequality~\eqref{ineq:Efx_Ef} holds trivially. We now focus on proving Inequality~\eqref{ineq:Ef_lambda_x_Ef}.
In particular, we show that for any vector $\Bf, \Bx \in \R^{k}$, the following
\begin{align}\label{ineq:singleEdgeLB}
    \norm{\Bf + \lambda \Bx}_q^{pq} - \norm{\Bf}_q^{pq} \ge \lambda \CR(\Bf; \Bx)
\end{align}
Inequality~\eqref{ineq:singleEdgeLB} implies Inequality~\eqref{ineq:Ef_lambda_x_Ef} by taking the summation over all edges.

\cref{lem:diff_upper_lower} gives 
\begin{align*}
    &~ \|\Bf + \lambda \Bx\|_q^{pq} - \|\Bf\|_q^{pq} \\
    \ge &~ p q \|\Bf\|_q^{(p-1)q} \langle |\Bf|^{q-2} \cdot \Bf, \lambda \Bx \rangle + \frac{p (q-1)}{16} \|\Bf\|_q^{(p-1) q} \cdot \gamma_q(\lambda \Bx, |\Bf|) + \frac{q-1}{pq - 1} \frac{1}{2^{pq + 2}} \|\lambda \Bx\|_{pq}^{pq} 
\end{align*}
In addition, we have 
\[
\lambda \CR(\Bx) = \lambda p q \|\Bf\|_q^{(p-1) q} \langle |\Bf|^{q-2} \cdot \Bf, \Bx \rangle + \frac{7\lambda }{k} \|\Bf\|_q^{(p-1) q} \cdot \gamma_q(6kp \Bx, |\Bf|) + \frac{3 \lambda (6pk)^{pq}}{k} \|\Bx\|_{pq}^{pq}. 
\]
In order to prove \eqref{ineq:singleEdgeLB}, it suffices to ensure that 
\begin{align*}
\frac{p (q-1)}{16} \|\Bf\|_q^{(p-1) q} \cdot \gamma_q(\lambda \Bx, |\Bf|)  &\ge \frac{7\lambda }{k} \|\Bf\|_q^{(p-1) q} \cdot \gamma_q(6kp \Bx, |\Bf|), \\
\frac{q-1}{pq - 1} \frac{1}{2^{pq + 2}} \|\lambda \Bx\|_{pq}^{pq} &\ge \frac{3 \lambda (6pk)^{pq}}{k} \|\Bx\|_{pq}^{pq}. 
\end{align*}
We can consider each entry separately and \eqref{ineq:singleEdgeLB} follows if for any $f, x \in \R$, we have
\begin{align}
    \frac{p (q-1)}{16} \cdot \gamma_q(\lambda x, |f|) &\ge \frac{7 \lambda}{k} \cdot \gamma_q(6kpx, |f|), \label{ineq:first_lambda}\\
    \frac{q - 1}{pq - 1} \frac{1}{2^{pq + 2}} |\lambda x|^{pq} &\ge \frac{3 \lambda (6pk)^{pq}}{k} |x|^{pq}. \label{ineq:second_lambda}
\end{align}
Inequality~\eqref{ineq:first_lambda} follows from \cref{lemma:gammaScaling}
\[
\frac{p (q-1)}{16} \cdot \gamma_q(\lambda x, |f|) \ge \frac{p (q-1)}{16} \left(\frac{\lambda}{6kp}\right)^q \cdot \gamma_q(6kpx, |f|) \ge \frac{7 \lambda}{k} \cdot \gamma_q(6kpx, |f|), 
\]
if we set
\begin{align}\label{ineq:lambda_bound_1}
    \lambda \ge k p \left(\frac{4032}{q-1}\right)^{\frac{1}{q-1}} \ge 6kp
\end{align}
For Inequality~\eqref{ineq:second_lambda} to hold, we need to set 
\begin{align}\label{ineq:lambda_bound_2}
    \lambda \ge 1728 k \left(\frac{pq - 1}{q - 1}\right)^{\frac{1}{pq - 1}} p^{\frac{p q}{pq - 1}}
\end{align}

Observe that
\begin{align*}
    \left(\frac{pq-1}{q-1}\right)^{\frac{1}{pq-1}} &= (pq-1)^{\frac{1}{pq-1}} \left(\frac{1}{q-1}\right)^{\frac{1}{pq - 1}} \le e \left(\frac{1}{q-1}\right)^{\frac{1}{q-1}}\text{, and} \\ 
    p^{\frac{pq}{pq-1}} &= p^{1 + \frac{1}{pq - 1}} \le p^{1 + \frac{1}{p-1}} \le 2 p
\end{align*}
Combining the observation along with \eqref{ineq:lambda_bound_1} and \eqref{ineq:lambda_bound_2}, Inequality~\eqref{ineq:singleEdgeLB} follows if we set
\[
\lambda = O\left(k p \left(\frac{4032}{q-1}\right)^{\frac{1}{q-1}}\right)
\]
This concludes the proof.
\end{proof}

Using \cref{lem:Efx_Ef_lambda_x}, we now show that each iteration decreases the objective exponentially.
This is summarized by the following lemma
\begin{lemma}[Convergence Rate]
\label{lem:algoConvergence}
Let $\BF^*$ be the optimal solution to Problem~\eqref{eq:LqpFlow}.
At any iteration $t$ of \cref{alg:high_multicommodity}, we have
\begin{align*}
\CE(\BF^{(t+1)}) - \CE(\BF^*) \le \left(1 - \frac{1}{\lambda}\right) \left(\CE(\BF^{(t)}) - \CE(\BF^*)\right) + \exp(-\log^{C} m)
\end{align*}
\end{lemma}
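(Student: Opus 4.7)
The plan is to combine the two directions of \cref{lem:Efx_Ef_lambda_x} together with the near-optimality of the residual solver (\cref{lem:residualSolve}), using a specific suboptimal choice of circulation built from the current gap to the optimum.

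First I would apply the upper-bound direction \eqref{ineq:Efx_Ef} of \cref{lem:Efx_Ef_lambda_x} to the update $\BF^{(t+1)} = \BF^{(t)} + \BX^{(t)}$, obtaining
\begin{align*}
\CE(\BF^{(t+1)}) - \CE(\BF^{(t)}) \le \CR(\BX^{(t)}; \BF^{(t)}).
\end{align*}
Next, I would use \cref{lem:residualSolve}: since $\BX^{(t)}$ is an additive-$\exp(-\log^C m)$ minimizer of $\CR(\cdot; \BF^{(t)})$ over circulations, for any circulation $\BX$ we have $\CR(\BX^{(t)}; \BF^{(t)}) \le \CR(\BX; \BF^{(t)}) + \exp(-\log^C m)$.

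The key step is to pick the right comparator circulation. Take $\BX = \tfrac{1}{\lambda}(\BF^* - \BF^{(t)})$; this is feasible for the residual problem because $\BB^\top(\BF^* - \BF^{(t)}) = \BD - \BD = \mathbf{0}$. Now I would invoke the lower-bound direction \eqref{ineq:Ef_lambda_x_Ef} of \cref{lem:Efx_Ef_lambda_x} at $\BF = \BF^{(t)}$ and scaling $\lambda$, so that $\lambda \BX = \BF^* - \BF^{(t)}$:
\begin{align*}
\lambda \CR\!\left(\tfrac{1}{\lambda}(\BF^* - \BF^{(t)});\, \BF^{(t)}\right) \le \CE(\BF^*) - \CE(\BF^{(t)}).
\end{align*}
Dividing by $\lambda$ and chaining with the previous two inequalities yields
\begin{align*}
\CE(\BF^{(t+1)}) - \CE(\BF^{(t)}) \le \tfrac{1}{\lambda}\!\left(\CE(\BF^*) - \CE(\BF^{(t)})\right) + \exp(-\log^C m).
\end{align*}

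Finally, I would subtract $\CE(\BF^*)$ from both sides and rearrange to obtain
\begin{align*}
\CE(\BF^{(t+1)}) - \CE(\BF^*) \le \left(1 - \tfrac{1}{\lambda}\right)\!\left(\CE(\BF^{(t)}) - \CE(\BF^*)\right) + \exp(-\log^C m),
\end{align*}
which is exactly the claim. There is no real obstacle: once one recognizes that the correct comparator in the residual problem is the scaled-by-$1/\lambda$ error $\BF^* - \BF^{(t)}$, the two halves of \cref{lem:Efx_Ef_lambda_x} fit together with the additive solver error absorbed unchanged. The only mild subtlety is checking feasibility of this comparator (a circulation, as needed) and confirming that the lower bound \eqref{ineq:Ef_lambda_x_Ef} is applied with the right $\BF$ and step-size so that $\lambda \BX$ actually lands at $\BF^*$.
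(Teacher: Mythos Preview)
Your proposal is correct and follows essentially the same argument as the paper: apply \eqref{ineq:Efx_Ef} to bound $\CE(\BF^{(t+1)})-\CE(\BF^{(t)})$ by $\CR(\BX^{(t)};\BF^{(t)})$, use the near-optimality from \cref{lem:residualSolve} with the comparator $\BX=\tfrac{1}{\lambda}(\BF^*-\BF^{(t)})$, apply \eqref{ineq:Ef_lambda_x_Ef} to that choice, and rearrange. The feasibility check and the instantiation of \eqref{ineq:Ef_lambda_x_Ef} are handled exactly as you describe.
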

\begin{proof}
Recall that $\BF^{(t+1)} = \BF^{(t)} + \BX^{(t)}$ and $\BX^{(t)}$ is a high accuracy solution to the residual problem w.r.t. $\BF^{(t)}.$
\cref{lem:Efx_Ef_lambda_x} and the optimality of $\BX^{(t)}$ yields
\begin{equation}\label{ineq:thm_2}
\begin{aligned}
\CE(\BF^{(t+1)}) - \CE(\BF^{(t)})
&\le \CR(\BX^{(t)}) \\
&\le \CR\left(\lambda^{-1}\left(\BF^* - \BF^{(t)}\right)\right) + \exp(-\log^{C} m) \\
&\le \frac{1}{\lambda}\left(\CE(\BF^*) - \CE(\BF^{(t)})\right) + \exp(-\log^{C} m)
\end{aligned}
\end{equation}

We can conclude the lemma as follows:
\begin{align*}
\CE(\BF^{(t+1)}) - \CE(\BF^*)
&= \CE(\BF^{(t+1)}) - \CE(\BF^{(t)}) + \CE(\BF^{(t)}) - \CE(\BF^*) \\
&\le \frac{1}{\lambda}\left(\CE(\BF^*) - \CE(\BF^{(t)})\right) + \exp(-\log^{C} m) + \CE(\BF^{(t)}) - \CE(\BF^*) \\
&= \left(1 - \frac{1}{\lambda}\right) \left(\CE(\BF^{(t)}) - \CE(\BF^*)\right) + \exp(-\log^{C} m)
\end{align*}

\end{proof}

Then, we show how to find an initial solution efficiently.
\begin{lemma}[Initial Flow]
\label{lem:initialFlow}
Given an instance of Problem~\eqref{eq:LqpFlow}, there is an algorithm that runs in $\O(pmk)$-time and finds a feasible flow $\BF^{(0)} \in \R^{E \times k}$ such that $\CE(\BF^{(0)}) \le 4m^{p+1}\CE(\BF^*)$ where $\BF^*$ is the optimal solution.
\end{lemma}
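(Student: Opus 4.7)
The plan is to build $\BF^{(0)}$ column-by-column by solving $k$ independent single-commodity approximate $\ell_{pq}$-norm flow problems, and to analyze the result via Jensen's inequality together with a per-edge bound on the optimum. Concretely, for each commodity $j \in [k]$, I would compute a flow $\Bf_j^{(0)}$ with $\BB^\top \Bf_j^{(0)} = \Bd_j$ satisfying $\|\BW \Bf_j^{(0)}\|_{pq}^{pq} \le O(1) \cdot \min_{\BB^\top \Bf = \Bd_j} \|\BW \Bf\|_{pq}^{pq}$ in time $\O(pm)$ per commodity. Setting $\BF^{(0)} = [\Bf_1^{(0)}, \ldots, \Bf_k^{(0)}]$ yields a feasible multicommodity flow in total time $\O(pmk)$.

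To bound the objective, I apply Jensen's inequality to the convex function $x \mapsto x^p$: for nonnegative $a_j$, $\bigl(\sum_j a_j\bigr)^p \le k^{p-1} \sum_j a_j^p$. Plugging $a_j = |\Bf_{ej}^{(0)}|^q$ and summing over edges with the weights gives $\CE(\BF^{(0)}) \le k^{p-1} \sum_j \|\BW \Bf_j^{(0)}\|_{pq}^{pq}$. By the per-commodity approximation guarantee and the feasibility of $\BF^*_j$ for the $j$-th subproblem, $\|\BW \Bf_j^{(0)}\|_{pq}^{pq} \le O(1)\, \|\BW \BF^*_j\|_{pq}^{pq}$. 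The trivial per-edge inequality $|\BF^*_{ej}|^{pq} = (|\BF^*_{ej}|^q)^p \le \bigl(\sum_{j'} |\BF^*_{ej'}|^q\bigr)^p$ then implies $\|\BW \BF^*_j\|_{pq}^{pq} \le \CE(\BF^*)$, and summing over $j$ gives $\sum_j \|\BW \BF^*_j\|_{pq}^{pq} \le k\,\CE(\BF^*)$. Combining, $\CE(\BF^{(0)}) \le O(k^p)\, \CE(\BF^*) \le 4 m^{p+1} \CE(\BF^*)$, using $k \le m$ (as is implicit throughout, since the main algorithm is only meaningful in this regime) to convert $k^p$ into $m^p$ and absorbing constants.

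The main obstacle is realizing the constant-factor single-commodity $\ell_{pq}$-norm flow subroutine in $\O(pm)$ time. A raw electrical flow with resistance $\Bw_e^2$ gives a feasible flow by a single Laplacian solve in $\O(m)$ time, but yields only an $m^{\Theta(p)}$-factor approximation in the $\ell_{pq}$-norm via norm comparison, which combined with the $k^p$ Jensen factor can overshoot $m^{p+1}$ once $p$ is large. To tighten this to a constant factor, I would use the electrical flow as a warm start and run $O(p)$ rounds of iterative refinement for the single-commodity $\ell_{pq}$-norm objective, a special case of the framework developed in \cref{sec:iterative_refinement}. Each round reduces to a decomposable edge-separable residual problem solvable in $\O(m)$ time by a Laplacian-based solver, and the geometric convergence with condition number $O(p)$ (exactly as in \cref{lem:algoConvergence} specialized to $k=1$) ensures $O(p)$ rounds suffice to reach a constant-factor approximation. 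This bookkeeping delivers the claimed $\O(pmk)$ overall running time while preserving the $4m^{p+1}$ approximation guarantee.
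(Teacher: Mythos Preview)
Your approximation analysis is sound: the Jensen bound $\CE(\BF^{(0)}) \le k^{p-1}\sum_j \|\BW\Bf_j^{(0)}\|_{pq}^{pq}$ together with the per-edge domination $|\BF^*_{ej}|^{pq} \le \bigl(\sum_{j'}|\BF^*_{ej'}|^q\bigr)^p$ cleanly yields an $O(k^p)$ factor, which is indeed at most $4m^{p+1}$ when $k\le m$.

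The gap is in the runtime of your subroutine. The residual problem arising from iterative refinement of single-commodity $\ell_{pq}$-norm flow (the $k=1$ case of \cref{def:residual}) is \emph{not} a Laplacian system: each edge cost contains both a $\gamma_q$ term and an $|x|^{pq}$ term, and the paper solves it (\cref{lem:residualSolve}) by invoking the convex flow solver of \cref{thm:convexFlow}, which costs $m^{1+o(1)}$ per call, not $\O(m)$. No known algorithm solves this residual problem, or computes a constant-factor weighted $\ell_{pq}$-norm flow, in $\O(pm)$ time. Even granting your $O(p)$ condition number, starting from an electrical flow with error $m^{\Theta(p)}$ requires $\Theta(p^2\log m)$ refinement rounds to reach a constant factor, so the overall cost of your scheme would be $k\cdot\poly(p)\cdot m^{1+o(1)}$ rather than $\O(pmk)$.

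The paper avoids this by using an entirely different primitive. It calls Sherman's $\O(mk/\eps)$-time approximate maximum concurrent flow algorithm with $\eps=1/(pq)$, obtaining in $\O(pmk)$ time a multicommodity flow $\BF^{(0)}$ whose $\ell_{1,\infty}$ congestion $\max_e \Bw_e\|\BF^{(0)}_e\|_1$ is within a $(1+1/(pq))$ factor of that of $\BF^*$. A short chain of norm comparisons ($\|\cdot\|_q\le\|\cdot\|_1$, $\max_e$ versus $\sum_e$, and $\|\cdot\|_1\le k^{1-1/q}\|\cdot\|_q$) then converts this into the $4m^{p+1}$ bound on $\CE$. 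In short, the paper trades a looser approximation argument for a subroutine that genuinely runs in the stated time.
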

\begin{proof}
We let flow $\BF^{(0)}_j$ be an $(1+1/pq)$-approximate maximum concurrent flow on $(G, \Bw)$ that routes the demand $\BD.$
That is, $\BF^{(0)}$ is an $(1+1/pq)$-approximate solution to the following problem:
\begin{align*}
    \min_{\BB^\top \BF = \BD} \max_e \Bw_e \sum_j |\BF_{ej}|
\end{align*}
$\BF^{(0)}$ can be computed in $\O(pmk)$-time using the algorithm from \cite{sherman17}.
Let $\BF^*$ be the optimal solution to Problem~\eqref{eq:LqpFlow}.
We can view $\BF^*$ as a collection of $k$ single commodity flows and the approximation guarantee of $\BF^{(0)}$ yields
\begin{align}\label{ineq:initialFlow}
\max_e \Bw_e \norm{\BF^{(0)}_e}_1 \le \left(1 + \frac{1}{pq}\right) \max_e \Bw_e \norm{\BF^*_e}_1
\end{align}

We now analyze the approximation ratio of $\BF^{(0)}$.
Recall the fact that for any vector $\Bx \in \R^k$, we have $\norm{\Bx}_q \le \norm{\Bx}_1 \le m^{1-1/q}\norm{\Bx}_q.$
Combining the observation with Inequality~\eqref{ineq:initialFlow}, we have
\begin{align*}
\CE(\BF^*) 
&\le \CE(\BF^{(0)}) = \sum_e \Bw_e^{pq} \norm{\BF^{(0)}_e}_q^{pq} \\
&\le \sum_e \Bw_e^{pq} \norm{\BF^{(0)}_e}_{1}^{pq} \le m \left(\max_e \Bw_e \norm{\BF^{(0)}_e}_{1}\right)^{pq} \\
&\le m \left(1 + \frac{1}{pq}\right)^{pq} \left(\max_e \Bw_e \norm{\BF^*_e}_1\right)^{pq} \\
&\le 4m \sum_e \Bw_e^{pq} \norm{\BF^*_e}_1^{pq} \le 4m^{1 + pq - p} \sum_e \Bw_e^{pq} \norm{\BF^*_e}_{q}^{pq} \le 4m^{p+1} \CE(\BF^*)
\end{align*}
\end{proof}

We use \cref{lem:algoConvergence} to analyze the correctness of the algorithm and prove \cref{thm:LqpFlow}.
\begin{proof}[Proof of \cref{thm:LqpFlow}]
After $T = O(p \lambda \log(m / \varepsilon))$ iterations, we use \cref{lem:algoConvergence} to bound the objective of the final flow $\BF^{(T)}$ output by \cref{alg:high_multicommodity} as follows:
\begin{align*}
\CE(\BF^{(T)}) - \CE(\BF^*)
&\le \left(1 - \frac{1}{\lambda}\right)^T \left(\CE(\BF^{(0)}) - \CE(\BF^*)\right) + \exp(-\log^{C} m) \cdot \sum_{t = 0}^{T-1} \left(1 - \frac{1}{\lambda}\right)^t \\
&\le \exp\left(-\frac{T}{\lambda}\right) \left(\CE(\BF^{(0)}) - \CE(\BF^*)\right) + \lambda \cdot \exp(-\log^{C} m) \\
&\le \varepsilon \CE(\BF^*) + \varepsilon
\end{align*}
where the last inequality comes from $\CE(\BF^{(0)}) - \CE(\BF^*) \le 4m^{p+1} \CE(\BF^*)$ and set the constant $C$ sufficiently small in \cref{lem:residualSolve}.
Rearrangement yields that $\CE(\BF^{(T)})$ is at most $(1 + \varepsilon)\CE(\BF^*) + \varepsilon.$

We now analyze the runtime.
Initiazliation of $\BF^{(0)}$ takes $\O(pmk)$-time by \cref{lem:initialFlow}.
Each iteration takes $km^{1+o(1)}$-time due to \cref{lem:residualSolve} and there are $T = \O(p\lambda \log(1/\varepsilon))$ iterations.
The runtime bound follows.
\end{proof}

\subsection{Proof of Lemma~\ref{lem:diff_upper_lower}}
\label{sec:proof_lemma_lr}
In this section, we prove \cref{lem:diff_upper_lower}.

\LqpIterRefine*

We will use the following facts in the proof.

\begin{lemma}\label{lem:gamma_q_x_q}
For any $q \in (1, 2), x, f \in \R$, $\gamma_q(x, |f|) \le |x|^q$. 
\end{lemma}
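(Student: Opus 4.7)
The plan is a direct case analysis based on the piecewise definition of $\gamma_q$ in \cref{def:gamma}, separating the regime $|x| \le |f|$ from $|x| > |f|$.

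In the large-$x$ regime $|x| > |f|$, the definition gives $\gamma_q(x, |f|) = |x|^q - (1 - q/2)|f|^q$. Since $q \in (1, 2)$, we have $1 - q/2 > 0$, so subtracting the nonnegative quantity $(1 - q/2)|f|^q$ from $|x|^q$ only makes things smaller, and the bound $\gamma_q(x, |f|) \le |x|^q$ is immediate.

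In the small-$x$ regime $|x| \le |f|$ (and $f \ne 0$; the $f = 0$ case is trivial with $\gamma_q(x, 0) = |x|^q$ by the second branch), the definition gives $\gamma_q(x, |f|) = \frac{q}{2}|f|^{q-2} x^2$. The desired inequality is $\frac{q}{2}|f|^{q-2} x^2 \le |x|^q$. For $x = 0$ this is trivial, and for $x \ne 0$ it rearranges to $\frac{q}{2}|f|^{q-2} \le |x|^{q-2}$. Since $q - 2 < 0$, the function $t \mapsto t^{q-2}$ is decreasing on $(0, \infty)$, so $|x| \le |f|$ implies $|x|^{q-2} \ge |f|^{q-2}$. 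Combined with $q/2 \le 1$, we obtain $\frac{q}{2}|f|^{q-2} \le |f|^{q-2} \le |x|^{q-2}$, which gives the claim.

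There is no real obstacle here: the argument is essentially a two-line observation, with the only mild subtlety being the sign of $q-2$ (which flips the monotonicity of the power function) and handling the $f = 0$ boundary case by consulting the second branch of the definition. Putting the two cases together yields $\gamma_q(x, |f|) \le |x|^q$ for all $x, f \in \R$.
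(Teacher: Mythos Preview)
Your proof is correct and follows essentially the same approach as the paper: a two-case analysis on the branches of $\gamma_q$, using $1 - q/2 > 0$ in the large-$|x|$ case and the monotonicity of $t \mapsto t^{q-2}$ (from $q - 2 < 0$) in the small-$|x|$ case. The paper's write-up is slightly terser and records the marginally sharper bound $\gamma_q(x,|f|) \le \tfrac{q}{2}|x|^q$ in the first case, but the substance is identical.
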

\begin{proof}
When $|x| \le |f|$, we have
\[
\gamma_q(x, |f|) = \frac{q}{2} |f|^{q-2} x^2 \le \frac{q}{2} |x|^q. 
\] 
When $|x| > |f|$, 
\[
\frac{q}{2} |x|^q \le \gamma_q(x, |f|) = |x|^q - \left(1 - \frac{q}{2}\right) |f|^q \le |x|^q. 
\]
\end{proof}

\begin{lemma}\label{lem:Jensen}
For any $p \ge 2$ and $\Bu \in \R^k$, we have $\|\Bu\|_p^{p} \le \|\Bu\|_1^{p} \le k^{p-1}\|\Bu\|_p^{p}.$
\end{lemma}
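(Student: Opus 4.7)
The statement to prove is the two-sided norm-comparison $\|\Bu\|_p^p \le \|\Bu\|_1^p \le k^{p-1}\|\Bu\|_p^p$ for $p\ge 2$ and $\Bu\in\R^k$. This is a standard inequality and the plan is essentially to invoke two textbook arguments: a monotonicity-of-$\ell_p$-norm argument for the left inequality, and Jensen's inequality (equivalently, Hölder's inequality with the constant vector $\mathbf{1}$) for the right inequality. Neither step has any real obstacle.

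For the left inequality, I will first reduce to the case $\|\Bu\|_1=1$ by homogeneity (both sides scale as $|t|^p$ under $\Bu\mapsto t\Bu$, so the inequality is scale-invariant). In the normalized case, each coordinate satisfies $|\Bu_i|\le 1$, and since $p\ge 1$ we have $|\Bu_i|^p \le |\Bu_i|$ termwise. Summing over $i\in[k]$ yields $\|\Bu\|_p^p \le \|\Bu\|_1 = 1 = \|\Bu\|_1^p$. Undoing the normalization gives $\|\Bu\|_p^p \le \|\Bu\|_1^p$ for all $\Bu\in\R^k$.

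For the right inequality, I will apply Jensen's inequality to the convex function $\varphi(t)=t^p$ (convex because $p\ge 2 \ge 1$) using the uniform distribution over the $k$ coordinates of $|\Bu|$:
\begin{align*}
\left(\frac{1}{k}\sum_{i=1}^k |\Bu_i|\right)^{\!p} \;\le\; \frac{1}{k}\sum_{i=1}^k |\Bu_i|^p.
\end{align*}
Multiplying both sides by $k^p$ gives $\|\Bu\|_1^p \le k^{p-1}\|\Bu\|_p^p$, which is precisely the desired bound. (Equivalently, one can derive the same inequality from Hölder with exponents $p$ and $p/(p-1)$ applied to $|\Bu|$ and the all-ones vector: $\|\Bu\|_1 = \langle |\Bu|, \mathbf{1}\rangle \le \|\Bu\|_p \cdot k^{(p-1)/p}$, then raise to the $p$-th power.)

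Combining the two displays proves the lemma. The only thing to be mindful of in the write-up is handling the edge case $\Bu=\mathbf{0}$ separately in the homogeneity step (trivial: both sides are $0$), and noting that although the hypothesis $p\ge 2$ is assumed, both inequalities in fact hold for all $p\ge 1$; the proof uses only $p\ge 1$ to get convexity of $t\mapsto t^p$ and the termwise inequality $|\Bu_i|^p\le|\Bu_i|$ when $|\Bu_i|\le 1$.
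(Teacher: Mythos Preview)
Your proposal is correct and takes essentially the same approach as the paper: the paper's proof is a one-liner invoking the standard inequality $\|\Bu\|_1 \le k^{1-1/p}\|\Bu\|_p$ (i.e., H\"older with the all-ones vector), which is exactly the argument you give for the right inequality, while the left inequality is the usual monotonicity $\|\Bu\|_p \le \|\Bu\|_1$ that you also spell out. Your write-up is simply a more detailed version of the same textbook argument.
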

\begin{proof}
This directly follows from $\|\Bu\|_1 \le k^{1 - \frac{1}{p}} \|\Bu\|_p.$
\end{proof}

We first prove the upper bound part of \cref{lem:diff_upper_lower} (Inequality~\eqref{ineq:qpUB}).
The idea is to first apply \cref{lemma:qIterRefine} to bound the difference between $\|\Bf + \Bx\|_q^q$ and $\|\Bf\|_q^q$ and get
\begin{align*}
    \|\Bf + \Bx\|_q^q \le \|\Bf\|_q^q + \delta
\end{align*}
for some $\delta \in \R.$
Then, we apply \cref{lemma:pIterRefine} to the $(\|\Bf\|_q^q + \delta)^p$ part and get the upper bound for $\|\Bf + \Bx\|_q^{pq}.$


\begin{proof}[Proof of the upper bound in Lemma~\ref{lem:diff_upper_lower}(Inequality~\eqref{ineq:qpUB})] 
For two vectors $\Bf, \Bx \in \R^k$, we have $\|\Bf+\Bx\|_q^q = \sum_{j=1}^{k} |\Bf_j + \Bx_j|^q$, in which by Lemma~\ref{lemma:qIterRefine} we have  
\[
\frac{q-1}{q 2^q} \cdot \gamma_q(\Bx_j, |\Bf_j|) \le |\Bf_j + x_j|^q - |\Bf_j|^q - q |\Bf_j|^{q-2} \Bf_j \Bx_j \le 2^q \cdot \gamma_q(\Bx_j, |\Bf_j|), \forall j
\]
Let $C \defeq \|\Bf\|_q^q$.
Taking summation over $j$ of the above inequality yields  
\begin{align}\label{ineq:f_plus_x_q_q}
\frac{q-1}{q 2^q} \cdot \gamma_q(\Bx, |\Bf|) \le \|\Bf + \Bx\|_q^q - C - q \l |\Bf|^{q-2} \cdot \Bf, \Bx \r \le 2^q \cdot \gamma_q(\Bx, |\Bf|)
\end{align}
We can rearrange the upper bound part of \eqref{ineq:f_plus_x_q_q} and have
\begin{align*}
    \|\Bf + \Bx\|_q^q &\le C + \l\mb{1}, \Bu\r\text{ , where} \\
    \Bu_j &\defeq q |\Bf_j|^{q-2} \Bf_j \Bx_j + 2^q \cdot \gamma_q(\Bx_j, |\Bf_j|) \forall j
\end{align*}

\cref{lemma:pIterRefine} bounds $\|\Bf + \Bx\|_q^{pq}$ as follows:
\begin{align*}
\|\Bf + \Bx\|_q^{pq}
&\le (C + \l\mb{1}, \Bu\r)^p \\
&\le C^p + p C^{p-1} \l\mb{1}, \Bu\r + 2p^2 C^{p - 2} \l\mb{1}, \Bu\r^2 + p^p \left|\l\mb{1}, \Bu\r\right|^p \\
&\le C^p + p C^{p-1} \l\mb{1}, \Bu\r + 2kp^2 C^{p - 2} \l\mb{1}, \Bu^2\r + k^{p-1} p^p \l\mb{1}, |\Bu|^p\r
\end{align*}
where the last inequality follows from $\l\mb{1}, \Bu\r \le \|\Bu\|_1$ and \cref{lem:Jensen}.

To conclude the desired upper bound (Inequality~\eqref{ineq:qpUB}), it suffices to prove that 
\begin{align*}
 &~ p C^{p-1} \l\mb{1}, \Bu\r + 2kp^2 C^{p - 2} \l\mb{1}, \Bu^2\r + k^{p-1} p^p \l\mb{1}, |\Bu|^p\r \\
 \le &~ p q C^{p-1} \l |\Bf|^{q-2} \cdot \Bf, \Bx \r + \frac{7}{k} C^{p-1} \cdot \gamma_q(6kp \Bx, |\Bf|) + \frac{3 (6kp)^{pq}}{k} \|\Bx\|_{pq}^{pq}
\end{align*}
It even suffices to prove that the inequality holds for each coordinate $j \in [k]$, i.e., to prove that
\begin{equation}
\label{ineq:target_upper}
\begin{aligned}
&~ p C^{p-1} \Bu_j + 2 k p^2 C^{p-2} \Bu_j^2 + k^{p-1} p^p |\Bu_j|^p \\
\le &~ p q C^{p-1} |\Bf_j|^{q-2} \Bf_j \Bx_j + \frac{7}{k} C^{p-1} \cdot \gamma_q(6kp \Bx_j, |\Bf_j|) + \frac{3(6kp)^{pq}}{k} |\Bx_j|^{pq}
\end{aligned}    
\end{equation}
In the rest of the proof, we ignore the subscript $j$ and write $u, f$ and $x$ to refer to $\Bu_j, \Bf_j$ and $\Bx_j$ respectively.

We write the LHS of \eqref{ineq:target_upper} using the definition of $\Bu_j$ as
\begin{align*}
\LHS &\defeq p C^{p-1} u + 2k p^2 C^{p-2} u^2 + k^{p-1} p^p u^p\text{ , where} \\
u &= q |f|^{q-2} f x + 2^q \cdot \gamma_q(x, |f|)
\end{align*}

Next, we do a case analysis on the value of $\gamma_q(x, |f|)$.
\begin{enumerate}[(1)]
\item $|x| > |f|$

In this case, $\frac{q}{2} |x|^q \le \gamma_q(x, |f|) \le |x|^q$, $|f|^{q-1} |x| \le |x|^q$, and then 
\[
|u| \le q |f|^{q-1} |x| + 2^q \cdot \gamma_q(x, |f|) \le (2+2^q) \cdot \gamma_q(x, |f|) \le 6 \cdot \gamma_q(x, |f|) \le 6 |x|^q. 
\]
Plugging the upper bound on $|u|$ further bounds the LHS by:
\begin{equation}
\label{ineq:LHS_case1}
\begin{aligned}
    \LHS &\le p q C^{p-1} |f|^{q-2} f x + 4 p C^{p-1} \cdot \gamma_q(x, |f|) + 2 k p^2 C^{p-2} (6 \cdot \gamma_q(x, |f|))^2 + k^{p-1} p^p (6 |x|^q)^p \\
    &\le p q C^{p-1} |f|^{q-2} f x + \frac{2}{3 k} C^{p-1} \cdot \gamma_q(6kpx, |f|) + \frac{2}{k} (6kp)^2 C^{p-2} \cdot \gamma_q(x, |f|)^2 + \frac{(6kp)^p}{k} |x|^{pq}
\end{aligned}
\end{equation}
where the second inequality follows follows from \cref{lemma:gammaScaling}: $\gamma_q (6kpx, |f|) \ge (6kp)^q \cdot \gamma_q(x, |f|) \ge 6 k p \cdot \gamma_q(x, |f|)$.

Next, we show that the term $(6kp)^2 C^{p-2}\gamma_q(x, |f|)^2$ is consumed by either the $\gamma_q(6kpx, |f|)$ part or the $|x|^{pq}$ part.
If $6kp \cdot \gamma_q(x, |f|) \le C$, we have 
\begin{align}\label{ineq:6kp2_Cp2_1}
(6kp)^2 C^{p-2} \cdot \gamma_q(x, |f|)^2 \le 6kp C^{p-1} \cdot \gamma_q(x, |f|) \le C^{p-1} \cdot \gamma_q(6kpx, |f|)
\end{align}
otherwise, we have
\begin{align}\label{ineq:6kp2_Cp2_2}
(6kp)^2 C^{p-2} \cdot \gamma_q(x, |f|)^2 \le (6kp)^p \cdot \gamma_q(x, |f|)^p \le (6kp)^p |x|^{pq}
\end{align}
Combining \eqref{ineq:LHS_case1}, \eqref{ineq:6kp2_Cp2_1}, and \eqref{ineq:6kp2_Cp2_2} yields the following:  
    \begin{align}\label{ineq:LHS_upper_1}
       \LHS &\le p q C^{p-1} |f|^{q-2} f x + \frac{3}{k} C^{p-1} \cdot \gamma_q(6kpx, |f|) + \frac{3 (6kp)^p}{k} |x|^{pq}. 
    \end{align}
\item $|x| \le |f|$ 

In this case, $\gamma_q(x, |f|) = \frac{q}{2} |f|^{q-2} x^2 \le \frac{q}{2} |f|^{q-1} |x|$ and we have
\[
|u| \le q |f|^{q-1} |x| + 2^q \cdot \gamma_q(x, |f|) \le (q + q 2^{q-1}) |f|^{q-1} |x| \le 6 |f|^{q-1} |x|. 
\]
Plugging the upper bound on $u$ further bounds the LHS by:
\begin{equation}
\label{ineq:LHS_upper}
\begin{aligned}
    \LHS &\le p q C^{p-1} |f|^{q-2} f x + 4 p C^{p-1} \cdot \gamma_q(x, |f|) + 2 k p^2 C^{p-2} (6 |f|^{q-1} |x|)^2 + k^{p-1} p^p (6 |f|^{q-1} |x|)^p \\
    &\le p q C^{p-1} |f|^{q-2} f x + \frac{2}{3 k} C^{p-1} \cdot \gamma_q(6kpx, |f|) + \frac{2}{k} C^{p-2} (|f|^{q-1} 6 k p |x|)^2 + \frac{1}{k} (|f|^{q-1} 6 k p |x|)^p
\end{aligned}
\end{equation}
where the second inequality follows follows from $\gamma_q(6kpx, |f|) \ge 6 k p \cdot \gamma_q(x, |f|)$. 

To continue, we do another case analysis on the value of $\gamma_q(6kpx, |f|).$
\begin{enumerate}
    \item $|x| \le |6kpx| \le |f|$

    In this case, we have $\gamma_q(6kpx, |f|) = \frac{q}{2} |f|^{q-2}  (6kpx)^2$ and $|f|^{q-1} 6 k p |x| \le |f|^q \le C$, and we can bound the higher order parts of \eqref{ineq:LHS_upper} as follows:
    \begin{equation}
    \label{ineq:LHS_last_two}
    \begin{aligned}
    &~ \frac{2}{k} C^{p-2} (|f|^{q-1} 6 k p |x|)^2 + \frac{1}{k} (|f|^{q-1} 6 k p |x|)^p \\
    \le &~ \frac{2}{k} C^{p-2} (|f|^{q-1} 6 k p |x|)^2 + \frac{1}{k} C^{p-2} (|f|^{q-1} 6 k p |x|)^2 \\
    = &~ \frac{3}{k} C^{p-2} (|f|^{q-1} 6 k p |x|)^2
    \le \frac{3}{k} C^{p-1} |f|^{q-2} (6kpx)^2
    \le \frac{6}{k} C^{p-1} \cdot \gamma_q(6kpx, |f|), 
    \end{aligned}
    \end{equation}
    where the third step follows from $|f|^q \le C$ and the last step follows from $|f|^{q-2} (6kpx)^2 \le 2 \cdot \gamma_q(6kpx, |f|)$. 
    Combining \eqref{ineq:LHS_upper} with \eqref{ineq:LHS_last_two} gives 
    \begin{equation}\label{ineq:LHS_upper_2}
    \begin{aligned}
    \LHS &\le p q C^{p-1} |f|^{q-2} f x + \frac{2}{3 k} C^{p-1} \cdot \gamma_q(6kpx, |f|) + \frac{6}{k} C^{p-1} \cdot \gamma_q(6kpx, |f|) \\
    &\le p q C^{p-1} |f|^{q-2} f x + \frac{7}{k} C^{p-1} \cdot \gamma_q(6kpx, |f|). 
    \end{aligned}
    \end{equation}
    
    \item $|x| \le |f| < |6 k p x|$

    In this case, $\gamma_q(6kpx, |f|) \ge \frac{q}{2}|6kpx|^q$ and $|f|^{q-1} 6 k p |x| \le (6 k p |x|)^q \le 2 \cdot \gamma_q(6kpx, |f|)$.
    Using this, we bound the higher order parts of \eqref{ineq:LHS_upper} as follows:
    \begin{align}\label{ineq:LHS_case2b}
        \LHS \le p q C^{p-1} |f|^{q-2} f x + \frac{2}{3k} C^{p-1} \cdot \gamma_q(6kpx, |f|) + \frac{2}{k} C^{p-2} (6kpx)^{2q} + \frac{1}{k} |6kpx|^{pq}.
    \end{align}
    Next, we observe that $C^{p-2} (6kpx)^{2q}$ is consumed by either the $\gamma_q(6kpx, |f|)$ part or the $|6kpx|^{pq}$ part.
    If $(6kpx)^q \le C$, we have
    \begin{align}\label{ineq:Cp2_6kpx2q_1}
    C^{p-2} (6kpx)^{2q} \le C^{p-1} (6kpx)^q \le 2 C^{p-1} \cdot \gamma_q(6kpx, |f|); 
    \end{align}
    otherwise, 
    \begin{align}\label{ineq:Cp2_6kpx2q_2}
    C^{p-2} (6kpx)^{2q} \le (6kpx)^{pq}. 
    \end{align}
    Combining \eqref{ineq:LHS_case2b}, \eqref{ineq:Cp2_6kpx2q_1}, and \eqref{ineq:Cp2_6kpx2q_2}, we can obtain 
    \begin{align}\label{ineq:LHS_upper_3}
    \LHS \le p q C^{p-1} |f|^{q-2} f x + \frac{5}{k} C^{p-1} \cdot \gamma_q(6kpx, |f|) + \frac{3 (6kp)^{pq}}{k} |x|^{pq}. 
    \end{align}
\end{enumerate}
\end{enumerate}
\eqref{ineq:LHS_upper_1}, \eqref{ineq:LHS_upper_2}, and \eqref{ineq:LHS_upper_3} proves \eqref{ineq:target_upper} for each coordinate $j$ and concludes the upper bound.
\end{proof}

Next, we prove the lower bound part of \cref{lem:diff_upper_lower} (Inequality~\eqref{ineq:qpLB}).
Our main technical ingredient is the following lemma that shows the Hessian of the function $\|\Bx\|_q^{pq}$ can be approximated by a diagonal matrix.
Combining with the integral definition of the Bregman divergence (\cref{def:bregman}) and the lower bound for the Bregman divergence of $|x|^p$ for large $p$, we have the desired lower bound.


\begin{lemma}\label{lem:bregman}
For $1 < q \le 2 \le p$, the Hessian of the function $\ell(\Bx) \defeq \|\Bx\|_q^{pq}, \Bx \in \R^k$ is approximated by a diagonal matrix, that is, we have
    \[
    p q (q-1) \|\Bx\|_q^{(p-1)q} \cdot \diag(|\Bx|^{q-2}) \preceq \g^2 \ell(\Bx) \preceq p q (p q - 1) \|\Bx\|_q^{(p-1)q} \cdot \diag(|\Bx|^{q-2})
    \]
\end{lemma}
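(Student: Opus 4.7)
The plan is to compute $\g^2 \ell(\Bx)$ directly by writing $\ell(\Bx) = g(\Bx)^p$ where $g(\Bx) \defeq \|\Bx\|_q^q = \sum_i |\Bx_i|^q$, and then bound the two resulting terms. First I would compute $\g g(\Bx)_i = q\, \sgn(\Bx_i)|\Bx_i|^{q-1}$ and $\g^2 g(\Bx) = q(q-1)\diag(|\Bx|^{q-2})$, which gives the clean decomposition
\begin{align*}
\g^2 \ell(\Bx) = p(p-1)\,g(\Bx)^{p-2}\, \g g(\Bx)\, \g g(\Bx)^\top + p q (q-1)\, g(\Bx)^{p-1}\, \diag(|\Bx|^{q-2}).
\end{align*}

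For the lower bound, since $p \ge 2$ and the outer-product term $\g g(\Bx)\g g(\Bx)^\top \succeq 0$, I would simply drop the first term, leaving $\g^2 \ell(\Bx) \succeq p q (q-1) \|\Bx\|_q^{(p-1)q} \diag(|\Bx|^{q-2})$, which matches the claimed lower bound using $g(\Bx)^{p-1} = \|\Bx\|_q^{(p-1)q}$.

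For the upper bound, the key step is to control the rank-one term by a diagonal matrix. For any $\Bv \in \R^k$, apply Cauchy--Schwarz via the split $|\Bx_i|^{q-1}|\Bv_i| = |\Bx_i|^{q/2} \cdot |\Bx_i|^{(q-2)/2}|\Bv_i|$ to get
\begin{align*}
(\g g(\Bx)^\top \Bv)^2 = q^2 \Bigl(\sum_i \sgn(\Bx_i)|\Bx_i|^{q-1} \Bv_i\Bigr)^2 \le q^2 \Bigl(\sum_i |\Bx_i|^q\Bigr)\Bigl(\sum_i |\Bx_i|^{q-2}\Bv_i^2\Bigr),
\end{align*}
which yields $\g g(\Bx) \g g(\Bx)^\top \preceq q^2\, g(\Bx)\, \diag(|\Bx|^{q-2})$. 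Plugging this back into the Hessian formula and combining with the diagonal term produces the coefficient $p(p-1)q^2 + pq(q-1) = pq(pq-1)$, giving exactly the claimed upper bound.

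The main obstacle is really just choosing the right Cauchy--Schwarz split so that the coefficients in the rank-one and diagonal pieces combine cleanly; no delicate case analysis or integration is needed, in contrast to the proof of the companion inequality~\eqref{ineq:qpLB} where one must feed this Hessian estimate into the integral representation of the Bregman divergence and invoke the lower bound from \cref{lemma:pIterRefine,lemma:qIterRefine}.
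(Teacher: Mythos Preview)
Your proposal is correct and essentially identical to the paper's proof: both compute the Hessian as the sum of the diagonal term $pq(q-1)\|\Bx\|_q^{(p-1)q}\diag(|\Bx|^{q-2})$ and the rank-one term $p(p-1)q^2\|\Bx\|_q^{(p-2)q}(|\Bx|^{q-2}\cdot\Bx)(|\Bx|^{q-2}\cdot\Bx)^\top$, drop the rank-one part for the lower bound, and bound it above by the same Cauchy--Schwarz split $|\Bx_i|^{q-1}|\Bv_i| = |\Bx_i|^{q/2}\cdot|\Bx_i|^{(q-2)/2}|\Bv_i|$ to get the coefficient $p(p-1)q^2 + pq(q-1) = pq(pq-1)$. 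Your packaging via the composition $\ell = g^p$ with $g = \|\Bx\|_q^q$ is a slightly cleaner way to organize the same computation.
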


\begin{proof}
We first write out the explicit expression of the gradient and the Hessian of $\ell(\Bx):$
\begin{align*}
\g \ell(\Bx) &= p q \|\Bx\|_q^{(p-1)q} |\Bx|^{q-2} \cdot \Bx, \\
\g^2 \ell(\Bx) &= p q (q-1) \|\Bx\|_q^{(p-1)q} \cdot  \diag(|\Bx|^{q-2}) + p (p-1) q^2 \|\Bx\|_q^{(p-2)q} (|\Bx|^{q-2} \cdot \Bx) (|\Bx|^{q-2} \cdot \Bx)^\top
\end{align*}
This follows from, for any coordinates $i, j \in [k]$, 
\begin{align*}
    \frac{\partial \ell(\Bx)}{\partial \Bx_i}
    &= p q \|\Bx\|_q^{(p-1)q} |\Bx_i|^{q - 2} \Bx_i, \\
    \frac{\partial^2 \ell(\Bx)}{\partial \Bx_i^2} 
    &= p q (q-1) \|\Bx\|_q^{(p-1)q} |\Bx_i|^{q-2} + p (p-1) q^2 \|\Bx\|_q^{(p-2)q} |\Bx_i|^{2q-2}, \\
    \frac{\partial^2 \ell(\Bx)}{\partial \Bx_i \partial \Bx_j}
    &= p (p-1) q^2 \|\Bx\|_q^{(p-2)q} |\Bx_i|^{q-2} \Bx_i |\Bx_j|^{q-2} \Bx_j. 
\end{align*}
The lower bound follows directly from the expression of $\g^2 \ell(\Bx).$

For the upper bound, we have, for any vector $\Bv \in \R^k$, that
\begin{align*}
    \Bv^\top \g^2 \ell(\Bx) \Bv &= p q (q-1) \|\Bx\|_q^{(p-1)q} \cdot \Bv^\top \diag(|\Bx|^{q-2}) \Bv + p (p-1) q^2 \|\Bx\|_q^{(p-2) q} \l|\Bx|^{q-2} \cdot \Bx, \Bv\r^2 \\
    &\le p q (q-1) \| \Bx \|_q^{(p-1) q} \sum_{i} |\Bx_i|^{q-2} \Bv_i^2 + p (p-1) q^2 \|\Bx \|_q^{(p-2) q} \|\Bx\|_q^q \sum_{i} |\Bx_i|^{q-2} \Bv_i^2 \\
    &= p q (p q - 1) \|\Bx\|_q^{(p-1) q} \sum_{i} |\Bx_i|^{q-2} \Bv_i^2
\end{align*}
where the second step follows from Cauchy-Schwarz inequality 
\begin{align*}
\l|\Bx|^{q-2} \cdot \Bx, \Bv\r^2
\le \norm{|\Bx|^{\frac{q - 2}{2}} \Bx}_2^2\norm{|\Bx|^{\frac{q-2}{2}}\Bv}_2^2
= \norm{\Bx}_q^q  \sum_{i} |\Bx_i|^{q-2} \Bv_i^2
\end{align*}
\end{proof}

Now we prove the lower bound. 

\begin{proof}[Proof of the lower bound in Lemma~\ref{lem:diff_upper_lower} (Inequality~\eqref{ineq:qpLB})]
By \eqref{ineq:f_plus_x_q_q}, we have 
\begin{align*}
\|\Bf + \Bx\|_q^q &\ge C + q \langle |\Bf|^{q-2} \cdot \Bf, \Bx \rangle + \frac{q-1}{q 2^q} \cdot \gamma_q(\Bx, |\Bf|) \ge C + G + B\text{ , where} \\
C &\defeq \|\Bf\|_q^q, \\
G &\defeq q \langle |\Bf|^{q-2} \cdot \Bf, \Bx \rangle, \\
B &\defeq \frac{q-1}{8} \cdot \gamma_q(\Bx, |\Bf|). 
\end{align*}

However, since $G$ might be negative and hence $C + G + B$, we cannot lower bound $\|\Bf + \Bx\|_q^q$ by $|C + G + B|$ and apply \cref{lemma:pIterRefine} on $|C + G + B|^p.$
To bypass the issue, we consider two cases based on the value of $G.$
\begin{enumerate}[(1)]
    \item $C+G+B \ge 0$

    In this case, we can apply \cref{lemma:pIterRefine} to give a lower bound on $\|\Bf + \Bx\|_q^{pq}$ and this gives
    \begin{align*}
    \|\Bf + \Bx\|_q^{pq} &\ge (C + G + B)^p \\
    &\ge C^p + p C^{p-1} (G + B) + \frac{p}{8} C^{p-2} (G + B)^2 + \frac{1}{2^{p+1}} (G+B)^p \\
    &\ge C^p + p C^{p-1} (G + B) \\
    &= C^p + p C^{p-1} G + p C^{p-1} B
    \end{align*}
    \item $C+G+B < 0$

    In this case, we know that $G + B < -C$ and $p C^{p-1} (G + B) < - p C^p$, that is, it holds trivially that $\|\Bf + \Bx\|_q^{pq} \ge C^p + p C^{p-1} (G + B)$. 
\end{enumerate}
Hence, regardless of the value of $G$, we have
\begin{align}\label{ineq:lower_new_1}
\|\Bf + \Bx\|_q^{pq} - C^p -p C^{p-1} G \ge p C^{p-1} B = \frac{p(q-1)}{8} C^{p-1} \cdot \gamma_q(\Bx, |\Bf|). 
\end{align}

Finally, we will use the integral definition of Bregman divergence and \cref{lem:bregman} to establish the $\|\Bx\|_{pq}^{pq}$ part in the lower bound.
By defining $\ell(\Bf) = \|\Bf\|_q^{pq}$, we know that the LHS of the Inequality~\eqref{ineq:qpLB} is the Bregman divergence of $\ell(\cdot)$, i.e., 
\begin{align*}
\|\Bf + \Bx\|_q^{pq} - C^p -p C^{p-1} G = \ell(\Bf + \Bx) - \ell(\Bf) - \l\g \ell(\Bf), \Bx\r = D_{\ell}(\Bf + \Bx; \Bf)
\end{align*}
The integral-based definition of Bregman divergence (\cref{def:bregman}) yields
\begin{align*}
\|\Bf + \Bx\|_q^{pq} - C^p -p C^{p-1} G
&= D_{\ell}(\Bf + \Bx; \Bf) \\
&= \int_{0}^{1} \int_{0}^{t} \Bx^\top \g^2 \ell(\Bf + u \Bx) \Bx \mathrm{d} u \mathrm{d} t \\
&\ge p q (q-1) \int_{0}^{1} \int_{0}^{t} \|\Bf + u \Bx\|_q^{(p-1)q} \cdot \Bx^\top \diag(|\Bf + u \Bx|^{q-2}) \Bx \mathrm{d} u \mathrm{d} t \\
&= p q (q-1) \int_{0}^{1} \int_{0}^{t} \left(\sum_{j} |\Bf_j + u \Bx_j|^q\right)^{p-1} \left(\sum_{j} |\Bf_j + u \Bx_j|^{q-2} \Bx_j^2\right)\mathrm{d} u \mathrm{d} t
\end{align*}
where the inequality comes from the lower bound on $\g^2 \ell(\Bf)$ (\cref{lem:bregman}).

Fix any $u$, we have the following inequality based on Cauchy-Schwarz:
\begin{align*}
\left(\sum_{j} |\Bf_j + u \Bx_j|^q\right)^{p-1} \left(\sum_{j} |\Bf_j + u \Bx_j|^{q-2} \Bx_j^2\right) \ge \sum_{j} |\Bf_j + u \Bx_j|^{(p-1)q + q-2} \Bx_j^2 = \sum_{j} |\Bf_j + u \Bx_j|^{pq-2} \Bx_j^2
\end{align*}
Using this, we can further derive the lower bound as follows:
\begin{align*}
\|\Bf + \Bx\|_q^{pq} - C^p -p C^{p-1} G
&\ge p q (q-1) \int_{0}^{1} \int_{0}^{t} \left(\sum_{j} |\Bf_j + u \Bx_j|^q\right)^{p-1} \left(\sum_{j} |\Bf_j + u \Bx_j|^{q-2} \Bx_j^2\right)\mathrm{d} u \mathrm{d} t \\
&\ge p q (q-1) \int_{0}^{1} \int_{0}^{t} \sum_{j} |\Bf_j + u \Bx_j|^{pq-2} \Bx_j^2 \mathrm{d} u \mathrm{d} t \\
&= p q (q-1) \sum_j \int_{0}^{1} \int_{0}^{t} |\Bf_j + u \Bx_j|^{pq-2} \Bx_j^2 \mathrm{d} u \mathrm{d} t \\
&= p q (q-1) \sum_j \frac{1}{pq (pq - 1)} D_{|x|^{pq}}(\Bf_j + \Bx_j; \Bf_j),
\end{align*}
where the last step follows from that the integral is exactly the Bregman divergence w.r.t. the function $|x|^{pq}.$

Finally, we apply \cref{lemma:pIterRefine} to lower bound the Bregman divergence of $|x|^{pq}$ and have:
\begin{align*}
\|\Bf + \Bx\|_q^{pq} - C^p -p C^{p-1} G
&\ge p q (q-1) \sum_j \frac{1}{pq (pq - 1)} D_{|x|^{pq}}(\Bf_j + \Bx_j; \Bf_j) \\
&\ge \frac{(q - 1)}{pq (pq - 1)} \sum_j \frac{1}{2^{pq+1}} |\Bx_j|^{pq} = \frac{(q - 1)}{pq (pq - 1)} \frac{1}{2^{pq+1}} \|\Bx\|_{pq}^{pq}
\end{align*}

Combining this with \eqref{ineq:lower_new_1}, we concludes the lower bound as follows:
\begin{align*}
\|\Bf + \Bx\|_q^{pq} - \|\Bf\|_q^{pq} - pq \|\Bf\|_q^{q(p-1)} \l |\Bf|^{q - 2} \Bf, \Bx \r
&= \|\Bf + \Bx\|_q^{pq} - C^p -p C^{p-1} G \\
&\ge \frac{1}{2}\frac{p(q-1)}{8} C^{p-1} \cdot \gamma_q(\Bx, |\Bf|) + \frac{1}{2} \frac{(q - 1)}{pq (pq - 1)} \frac{1}{2^{pq+1}} \|\Bx\|_{pq}^{pq} \\
&= \frac{p(q-1)}{16} C^{p-1} \cdot \gamma_q(\Bx, |\Bf|) + \frac{(q - 1)}{pq (pq - 1)} \frac{1}{2^{pq+2}} \|\Bx\|_{pq}^{pq}. 
\end{align*}
\end{proof}

\section{Solving the Residual Problem}
\label{sec:residualSolve}
In this section, we prove \cref{lem:residualSolve}.

\resSolve*

At a high level, we will show that solving the residual problem is equivalent to solving $k$ instances of single commodity convex flow problems.
Each of them can be solved in $m^{1+o(1)}$-time via \cref{thm:convexFlow}.
In particular, we need to construct computationally efficient barriers for the edge cost functions.


\begin{proof}[Proof of \cref{lem:residualSolve}]
Recall the residual problem 
\begin{equation}\label{formulation:residual_new}
    \begin{aligned}
        \min_{\BX \in \R^{E \times k}} \ & \CR(\BX; \BF) \\
        \st \ & \BB^\top \BX = \mb{0}, 
    \end{aligned}
\end{equation}
From \cref{def:residual}, we know
\begin{align*}
\CR(\BX; \BF)
&= \sum_{e \in E} \sum_{j = 1}^k \Bw_e^{pq} \cdot \left(p q \|\BF_e\|_q^{q(p-1)} |\BF_{ej}|^{q-2} \BF_{ej} \BX_{ej} + \frac{7}{k} \|\BF_e\|_q^{q (p-1)} \gamma_q(6kp \BX_{ej}, |\BF_{ej}|) + \frac{3(6pk)^{pq}}{k} |\BX_{ej}|^{pq}\right)
\end{align*}
For each commodity $j \in [k]$, we write $\BX_j$ and $\BF_j$ to be the $j$-th column of $\BX$ and $\BF$ respectively, i.e., the flow that routes the commodity $j$.
The constraint $\BB^\top \BX = \mb{0}$ is equivalent to $\BB^\top \BX_j = \mb{0}$ for each $j.$
Thus, \eqref{formulation:residual_new} is equivalent to solve, for each commodity $j$, the following single commodity flow problem:
\begin{equation}\label{eq:resPerJ}
\begin{aligned}
\min_{\Bx \in \R^E} \sum_{e \in E} \ & \Bw_e^{pq} \cdot \left(p q \|\BF_e\|_q^{q(p-1)} |\BF_{ej}|^{q-2} \BF_{ej} \Bx_{e} + \frac{7}{k} \|\BF_e\|_q^{q (p-1)} \gamma_q(6kp \Bx_{e}, |\BF_{ej}|) + \frac{3(6pk)^{pq}}{k} |\Bx_{e}|^{pq}\right) \\
\st \ & \BB^\top \Bx = \mb{0}
\end{aligned}
\end{equation}

To use \cref{thm:convexFlow} to solve \eqref{eq:resPerJ}, we need to show that the objective is a sum of convex edge costs with efficient barriers.
For each edge $e$ and commodity $j$, we define the cost $c_{ej}(x)$ to be
\begin{align*}
c_{ej}(x) &=  A_{ej} x +  B_{ej}\gamma_q(6kp x, |\BF_{ej}|) +  C_{ej} |x|^{pq}\text{ , where} \\
A_{ej} &\defeq \Bw_e^{pq} \cdot p q \|\BF_e\|_q^{q(p-1)} |\BF_{ej}|^{q-2} \BF_{ej} \\
B_{ej} &\defeq \Bw_e^{pq} \cdot \frac{7}{k} \|\BF_e\|_q^{q (p-1)} \\
C_{ej} &\defeq \Bw_e^{pq} \cdot \frac{3(6pk)^{pq}}{k}
\end{align*}
Clearly $c_{ej}(x)$ is a convex function and the objective of \eqref{eq:resPerJ} is exactly $\sum_e c_{ej}(\Bx_e).$

We now present computationally efficient barriers for each $c_{ej}(x)$ and show that $c_{ej}(x)$ satisfies \cref{ass:cvxCost}.
For clarity, we ignore both subscripts $e$ and $j$ and use $f$ to denote $|\BF_{ej}|$.
That is, we will show that the following function satisfies \cref{ass:cvxCost} for any positive $A, B, C, f > 0$:
\begin{align*}
    c(x) = A \cdot x + B \cdot \gamma_q(6kp x; f) + C \cdot |x|^{pq}
\end{align*}
\cref{item:costQuasiPoly} holds because $p = \O(1).$

We now show \cref{item:costSC} using Theorem 9.1.1 from \cite{nemirovski04}.
It says that the function $\psi_c(x, y) = -\ln(y - c(x))$ is an $1$-self-concordance barrier for the epigraph $\CD_c \defeq \{(x, y) | c(x) \le y\}$ if there is some $\beta > 0$ such that $|c'''(x)| \le 3 \beta |c''(x) / x|$ holds for all $x \in \R.$
The derivatives of $c(x)$ have different forms depending on the value of $x$ due to the $\gamma_q$ function.
\begin{itemize}
\item If $|6kpx| \le f$, we know $\gamma_q(6kpx; f) = \frac{q}{2}f^{q-2}(6kp)^2x^2$ and
\begin{align*}
c'(x) &= A + B q f^{q-2} (6kp)^2 x + C pq |x|^{pq - 1} \sgn(x) \\
c''(x) &= B q f^{q-2} (6kp)^2 + C pq (pq-1) |x|^{pq - 2}\\
c'''(x) &= C pq(pq-1)(pq-2) |x|^{pq - 3}\sgn(x)
\end{align*}
In this case, we have
\begin{align*}
\left|\frac{c''(x)}{x}\right| = \frac{B q f^{q-2} (6kp)^2}{|x|} + C pq (pq-1) |x|^{pq - 3} \ge \frac{1}{pq-2}|c'''(x)|
\end{align*}
It suffices to set $\beta \ge (pq - 2) / 3.$
\item Otherwise, $|6kpx| > f$, we know $\gamma_q(6kpx; f) = |6kpx|^q - (1 - \frac{q}{2})f^q$, and
\begin{align*}
c'(x) &= A + B (6kp)^q q|x|^{q-1}\sgn(x) + C pq |x|^{pq - 1} \sgn(x) \\
c''(x) &= B (6kp)^q q(q-1)|x|^{q-2} + C pq (pq-1) |x|^{pq - 2}\\
c'''(x) &= B (6kp)^q q(q-1)(q-2)|x|^{q-3}\sgn(x) + C pq(pq-1)(pq-2) |x|^{pq - 3}\sgn(x)
\end{align*}
In this case, notice that $q - 2 < 0$ and we have
\begin{align*}
\left|\frac{c''(x)}{x}\right| &= B (6kp)^q q(q-1)|x|^{q-3} + C pq (pq-1) |x|^{pq - 3}\text{ , and}\\
|c'''(x)| &= \left|B (6kp)^q q(q-1)(q-2)|x|^{q-3} + C pq(pq-1)(pq-2) |x|^{pq - 3}\right| \\
&\le B (6kp)^q q(q-1)(2 - q)|x|^{q-3} + C pq(pq-1)(pq-2) |x|^{pq - 3}
\end{align*}
Setting $\beta \ge \max\{2 - q, pq-2\} / 3$ yields that
\begin{align*}
    3\beta\left|\frac{c''(x)}{x}\right| \ge  B (6kp)^q q(q-1)(2 - q)|x|^{q-3} + C pq(pq-1)(pq-2) |x|^{pq - 3} \ge |c'''(x)|
\end{align*}
\end{itemize}
We concludes \cref{item:costSC} with the barrier function $\psi_c(y, x) = -\ln(y - c(x)).$
\cref{item:costHessian} follows directly from \cref{item:costQuasiPoly} and \cref{item:costHessianCompute} follows using the explicit formula for $c(x), c'(x)$, and $c''(x).$

Thus, we can apply \cref{thm:convexFlow} to compute, for each $j$, a flow $\Bx^j$ in $m^{1+o(1)}$-time such that
\begin{align*}
    \sum_e c_{ej}(\Bx^{j}_e) \le \min_{\BB^{\top} \Bx^* = \mb{0}} \sum_e c_{ej}(\Bx^*_e) + \exp(-\log^C m)
\end{align*}
Define $\BX = [\Bx^1, \Bx^2, \ldots, \Bx^k].$
We have that $\BB^\top \BX = \mb{0}$ and $\BX$ is optimal to \eqref{formulation:residual_new} up to a $k\exp(-\log^C m)$-additive error, i.e.,
\begin{align*}
    \CR(\BX; \BF) \le \min_{\BB^\top \BX^*} \CR(\BX^*; \BF) + k \exp(-\log^C m)
\end{align*}
The total runtime is $k \cdot m^{1+o(1)}$ since we compute $\Bx^j$ for each $j \in[k].$

\end{proof}

\bibliography{Refs.bib}
\bibliographystyle{alpha}
\end{document}